\title{An Objective Improvement Approach \\ to Solving Discounted Payoff Games}
\author{Daniele Dell'Erba
\institute{University of Liverpool,\\ United Kingdom}
\email{\!\!\!\!daniele.dell-erba@liverpool.ac.uk\!\!\!\!}
\and
Arthur Dumas
\institute{ENS Rennes,\\ France}
\email{arthur.dumas@ens.rennes.fr}
\and
Sven Schewe
\institute{University of Liverpool,\\ United Kingdom}
\email{sven.schewe@liverpool.ac.uk}
}
\newcommand{\titlerunning}{An Objective Improvement Approach to Solving Discounted Payoff Games}
\newcommand{\authorrunning}{D. Dell'Erba, A. Dumas \& S. Schewe}
\newcommand{\off}{\mathsf{offset}}
\newcommand\val{\mathsf{val}}
\newcommand\out{\mathsf{out}}
\newcommand{\denom}{\mathsf{denom}}
\newtheorem{obs}{Observation}
\newtheorem{thm}{Theorem}
\newtheorem{crl}{Corollary}
\newtheorem{lm}{Lemma}
\begin{document}
\maketitle

\begin{abstract}
While discounted payoff games and classic games that reduce to them, like parity and mean-payoff games, are symmetric, their solutions are not.
We have taken a fresh view on the constraints that optimal solutions need to satisfy, and devised a novel way to converge to them, which is entirely symmetric.
It also challenges the gospel that methods for solving payoff games are either based on strategy improvement or on value iteration.
\end{abstract}

\section{Introduction}
\label{sec:introduction}

We study turn-based zero sum games played between two
players on directed graphs.
The two players take turns to move a token along the
vertices of finite labelled graph with the goal to optimise their adversarial
objectives.

Various classes of graph games are characterised by the objective of the
players, for instance in \emph{parity games} the objective is to optimise the parity
of the dominating colour occurring infinitely often, while in \emph{discounted and mean-payoff games} the objective of the players is to minimise resp.\ maximise the discounted and limit-average sum of
the colours.  

Solving graph games is the central and most expensive step
in many model checking~\cite{Koz83,EJS93,Wil01,AHM01,AHK02,SF06},
satisfiability checking~\cite{Wil01,Koz83,Var98},
and synthesis~\cite{Pit06,SF06a} algorithms.
Progress in algorithms for solving graph games will therefore allow for the
development of more efficient model checkers and contribute to bringing synthesis
techniques to practice.

There is a hierarchy among the graph games mentioned earlier, with simple and well known reductions from parity games to mean payoff games, from mean-payoff games to discounted payoff games, and from discounted payoff games to simple stochastic games like the ones from \cite{ZP96}, while no reductions are known in the other direction. Therefore, one can solve instances of all these games by using an algorithm for stochastic games.
All of these games are in \textsf{UP} and \textsf{co-UP} \cite{Jur98}, while no tractable algorithm is known.

Most research has focused on parity games: as the most special class of games, algorithms have the option to use the special structure of their problems, and they are most directly linked to the synthesis and verification problems mentioned earlier.
Parity games have thus enjoyed a special status among graph games and the quest for efficient
algorithms \cite{EL86,EJ91,McN93,ZP96,BCJLM97,Zie98,Obd03,BDM18a,BDM16b}  
for solving them has been an active field of research during the last
decades, which has received further boost with the arrival of quasi-polynomial techniques \cite{CJKLS22,JL17,FJKSSW19,LB20,LPSW22,DS22}.

Interestingly, the one class of efficient techniques for solving parity games that does not (yet) have a quasi-polynomial approach is strategy improvement
algorithms~\cite{Lud95,Pur95,VJ00,BV07,Sch08a,Fea10a,STV15},
a class of algorithms closely related to the Simplex for linear programming, known to perform well in practice. 
Most of these algorithms reduce to mean~\cite{BV07,Sch08a,STV15,BDM20}
or discounted~\cite{Lud95,Pur95,FGO20,Koz21} payoff games.

With the exception of the case in which the fixed-point of discounted payoff games is explicitly computed~\cite{ZP96}, all these algorithms share a disappointing feature: they are inherently non-symmetric approaches for solving an inherently symmetric problem.
However, some of these approaches have a degree of symmetry. Recursive approaches treat even and odd colours symmetrically, one at a time, but they treat the two players very differently for a given colour.
Symmetric strategy improvement \cite{STV15} runs a strategy improvement algorithms for both players in parallel, using the intermediate results of each of them to inform the updates of the other, but at heart, these are still two intertwined strategy improvement algorithms that, individually, are not symmetric.
This is in due to the fact that applying strategy improvement itself symmetrically can lead to cycles~\cite{Con93}.

The key contribution of this paper is to devise a new class of algorithms to solve discounted payoff games, which is entirely symmetric.
Like strategy improvement algorithms, it seeks to find co-optimal strategies, and improves strategies while they are not optimal.
In order to do so, however, it does not distinguish between the strategies of the two players.
This seems surprising, as maximising and minimising appear to pull in opposing directions.

Similar to strategy improvement approaches, the new objective improvement approach turns the edges of a game into constraints (here called inequations), and minimises an objective function.
However, while strategy improvement algorithms take only the edges in the strategy of one player (and all edges of the other player) into account and then finds the optimal response by solving the resulting one player game, objective improvement always takes all edges into account.
The strategies under consideration then form a subset of the inequations, and the goal would be to make them sharp (i.e.\ as equations), which only works when both strategies are optimal.
When they are not, then there is some \emph{offset} for each of the inequations, and the objective is to reduce this offset in every improvement step.

This treats the strategies of both players completely symmetrically.
\medskip

\noindent\textbf{Organisation of the Paper.}
The rest of the paper is organised as follows.
After the preliminaries (Section \ref{sec:prelims}),
we start by outlining our method and use a simple game to explain it (Section \ref{sec:outline}).
We then formally introduce our objective improvement algorithm in Section \ref{sec:general}, keeping the question of how to choose a better strategies abstract.
Section \ref{sec:choose} then discusses how to find better strategies.
We finally wrap up with a discussion of our results in Section~\ref{sec:discuss}.

\section{Preliminaries}
\label{sec:prelims}

A \emph{discounted payoff game} (DPG) is a tuple $\mathcal{G} = (V_{\min}, V_{\max}, E, w, \lambda)$, where $V= V_{\min} \cup V_{\max}$ are the vertices of the game, partitioned into two disjoint sets $V_{\min}$ and $V_{\max}$, such that the pair $(V, E)$ is a finite directed
graph without sinks.
The vertices in $V_{\max}$ (\emph{resp}, $V_{\min}$) are controlled by Player Max or maximiser (\emph{resp}, Player Min or minimiser) and $E \subseteq V \times V$ is the edge relation. Every edge has a weight represented by the function $w : E \to \mathbb{R}$, and a \emph{discount factor} represented by the function $\lambda : E \to [0,1)$. When the discount factor is uniform, i.e.\ the same for every edge, it is represented by a constant value $\lambda \in [0,1)$.
For ease of notation, we write $w_e$ and $\lambda_e$ instead of $w(e)$ and $\lambda(e)$.
A \emph{play} on $\mathcal{G}$ from a vertex $v$ is an infinite path, which can be represented as a sequence of edges $\rho=e_0 e_1 e_2 \ldots$ such that, for every $i \in \mathbb{N}^*$, $e_i=(v_i,v'_{i})\in E$, and, for all $i \in \mathbb{N}, v_{i+1}=v'_{i}$ and $v_0=v$.
By $\rho_i$ we refer to the i-th edge of the play.
The \emph{outcome} of a discounted game $\mathcal{G} = (V_{\min}, V_{\max}, E, w, \lambda)$ for a play $\rho$ is 
$\out(\rho)=\sum_{i=0}^{\infty} w_{e_i} \prod_{j=0}^{i-1}\lambda_{e_j}$.
For games with a constant discount factor, this simplifies in $\out(\rho)=\sum_{i=0}^{\infty} w_{e_i} \lambda^i$.

A positional strategy for Max is a function $\sigma_{\max}:V_{\max}\to V$ that maps each Max vertex to a vertex according to the set of edges, i.e. $(v,\sigma_{\max}(v))\in E$.
Positional Min strategies are defined accordingly, and we call the set of positional Min and Max strategies $\Sigma_{\min}$ and $\Sigma_{\max}$, respectively.

A pair of strategies $\sigma_{\min}$ and $\sigma_{\max}$, one for each player, defines a unique run $\rho(v,\sigma_{\min},\sigma_{\max})$ from each vertex $v \in V$.
Discounted payoff games are positionally determined~\cite{ZP96}:
$$
\sup_{\sigma_{\max} \in \Sigma_{\max}} \inf_{\sigma_{\min} \in \Sigma_{\min}} \out(\rho(v,\sigma_{\min},\sigma_{\max}))
=
\inf_{\sigma_{\min} \in \Sigma_{\min}} \sup_{\sigma_{\max} \in \Sigma_{\max}} \out(\rho(v,\sigma_{\min},\sigma_{\max})) 
$$

holds for all $v\in V$, and neither the value, nor the optimal strategy for which it is taken, changes when we allow more powerful classes of strategies that allow for using memory and/or randomisation for one or both players.

The resulting \emph{value of $\mathcal G$}, denoted by $\val(\mathcal G): V \rightarrow \mathcal R$, is defined as
$$ \val(\mathcal G): v \mapsto  \sup_{\sigma_{\max} \in \Sigma_{\max}} \inf_{\sigma_{\min} \in \Sigma_{\min}} \out(\rho(v,\sigma_{\min},\sigma_{\max}))\ .$$

The solution to a discounted payoff game is a valuation $\val = \val(\mathcal G)$ of $\mathcal G$ for the vertices such that, for every edge $e = (v,v')$, it holds that\footnote{These are the constraints represented in $H$ in Section \ref{sec:general}.}
\begin{itemize}
    \item $\val(v) \leq w_e + \lambda_e \val(v')$ if $v$ is a minimiser vertex and
    \item $\val(v) \geq w_e + \lambda_e \val(v')$ if $v$ is a maximiser vertex.
\end{itemize}

A positional maximiser (resp.\ minimiser) strategy $\sigma$ is optimal if, and only if, $\val(v) = w_{(v,\sigma(v))} + \lambda_{(v,\sigma(v))} \val(\sigma(v))$ holds for all maximiser (resp.\ minimiser) positions.

Likewise, we define the value of a pair of strategies $\sigma_{\min}$ and $\sigma_{\max}$, denoted $\val(\sigma_{\min},\sigma_{\max}): V \rightarrow \mathcal R$, as
$ \val(\sigma_{\min},\sigma_{\max}): v \mapsto  \out(\rho(v,\sigma_{\min},\sigma_{\max}))$.

As we treat both players symmetrically in this paper, we define a \emph{pair of strategies} $\sigma : V \mapsto V$ whose restriction to $V_{\min}$ and $V_{\max}$ are a minimiser strategy $\sigma_{\min}$ and a maximiser strategy  $\sigma_{\max}$, respectively.
We then write $\rho(v,\sigma)$ instead of $\rho(v,\sigma_{\min},\sigma_{\max})$ and $\val(\sigma)$ instead of $\val(\sigma_{\min},\sigma_{\max})$.

If both of these strategies are optimal, we call $\sigma$ a joint \emph{co-optimal} strategy.
This is the case if, and only if, $\val(\mathcal G)= \val(\sigma)$ holds.

Note that we are interested in the \emph{value} of each vertex, not merely if the value is greater or equal than a given threshold value.

\section{Outline and Motivation Example}
\label{sec:outline}

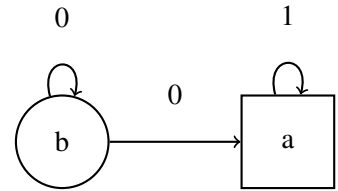
\begin{wrapfigure}{r}{0.4\textwidth} 
\vspace{-25pt}
  \begin{center}
\begin{tikzpicture}
    [node distance = 5em, bend angle = 22.5, inner sep = 0.3em, minimum size = 3.2em]
    \tikzset{every loop/.style = {max distance = 1.5em}}
    \node[shape=circle,draw,thick] (max) at (0,0) {b};
    \node[shape=rectangle,draw,thick] (min) at (3,0) {a};

    \draw[thick,->] (max) to node [above] {$0$} (min);
    \draw [thick,->] (max) edge[loop above]node{$0$} (max);
    \draw [thick,->] (min) edge[loop above]node{$1$} (min);
    \end{tikzpicture}
\caption{A discounted Payoff Game. Maximiser vertices are marked by a circle, minimizer ones by a square.}
\label{fig:example}
  \end{center}
\end{wrapfigure}
We start with considering the simple discounted payoff game of Figure~\ref{fig:example}, assuming that it has some uniform discount factor $\lambda \in [0,1)$.
In this game, the minimiser (who owns the right vertex, $a$, marked by a square), has only one option: she always has to use the self-loop, which earns her an immediate reward of $1$.
The overall reward the minimiser reaps for a run that starts in her vertex is therefore 
 $1 + \lambda + \lambda^2 + \ldots = \frac{1}{1-\lambda}$.
 
The maximiser (who owns the left vertex, $b$, marked by a circle) can choose to either use the self-loop, or to move to the minimiser vertex (marked by a square), both yielding no immediate reward.

If the maximiser decides to stay forever in his vertex (using the self-loop), his overall reward in the play that starts at (and, due to his choice, always stays in) his vertex, is $0$.
If he decides to move on to the minimiser vertex the $n^{th}$-time, 
then the reward is $\frac{\lambda^n}{1-\lambda}$.

The optimal decision of the maximiser is therefore to move on the first time, which yields him the maximal reward of $\frac{\lambda}{1-\lambda}$.
Every vertex $v$ has some outgoing edge(s) $e = (v,v')$ where $\val(v) = w_e + \lambda_e \val(v')$ holds~\cite{ZP96};
these edges correspond to the optimal decisions for the respective player.

For our running example game of Figure~\ref{fig:example} with a fixed discount factor $\lambda \in [0,1)$, the inequations are
\begin{enumerate}
    \item $\val(a) \leq 1 + \lambda \val(a)\quad $ for the self-loop of the minimiser vertex;
    \item $\val(b) \geq \lambda \val(b)\qquad\ \ $ for the self-loop of the maximiser vertex; and
    \item $\val(b) \geq \lambda \val(a)\qquad\ \ $ for the transition from the maximiser to the minimiser vertex.
\end{enumerate}
The unique valuation that satisfies these inequations and produces a sharp inequation (i.e. satisfied as equation) for some outgoing edge of each vertex assigns $\val(a)=\frac{1}{1-\lambda}$ and $\val(b)=\frac{\lambda}{1-\lambda}$.
This valuation also defines the optimal strategies of the players (to stay for the minimiser, and to move on for the maximiser).

Solving a discounted payoff game means finding this valuation and/or these strategies.

We discuss a symmetric approach to find this unique valuation.
Our approach adjusts linear programming in a natural way that treats both players symmetrically:
we maintain the set of inequations for the complete time, while approximating the goal of ``one equality per vertex'' by the objective function.
To do that, we initially fix an \emph{arbitrary} outgoing edge for every vertex (a strategy), and minimise the sum of the distances between the left and right side of the inequations defined by these edges, which we call the \emph{offset} of this edge. This means, for an edge $e = (v,v')$, to minimise the difference of $\val(v)$ (left side of the inequation) and $w_e + \lambda_e \val(v')$ (right side).

To make this clear, we consider again the example of Figure~\ref{fig:example} and use both self loops as the strategies for the players fixed at the beginning in our running example.
The offset for the selected outgoing edge of the minimiser vertex $a$ is equal to $1 -(1-\lambda)\val(a)$, while the offset for the selected outgoing edge of the maximiser vertex $b$ is equal to $(1-\lambda)\val(b)$.
The resulting overall objective consists, therefore, in minimising the value $1-(1-\lambda)\val(a) + (1-\lambda)\val(b)$.

This term is always non-negative, since it corresponds to the sum of the edges contributions that are all non-negative.
Moreover, when only optimal strategies are selected to form this objective function, the value $0$ can be taken, and where it is taken, it defines the correct valuation.

As the maximiser's choice to take the self-loop is not optimal, the resulting objective function the strategies define, that is $1-(1-\lambda)\val(a) + (1-\lambda)\val(b)$, cannot reach $0$.
But let us take a look at what an optimal solution w.r.t.\ this objective function looks like.

Optimal solutions can be taken from the corners of the polytope defined by the inequations.
In this case, the optimal solution (w.r.t.\ this initial objective function) is defined by making inequations (1) and (3) sharp: this provides the values $\val(a)=\frac{1}{1-\lambda}$ and $\val(b)=\frac{\lambda}{1-\lambda}$; the objective function takes the value $\lambda$ at this point.

For comparison, in the other corner of the polytope, defined by making inequations (2) and (3) sharp, we obtain the values $\val(a)=\val(b)=0$; the objective function takes the value $1$ at this point.
Finally, if we consider the last combination, making (1) and (2) sharp provides the values $\val(a)=\frac{1}{1-\lambda}$ and $\val(b)=0$, so that inequation (3) is not satisfied; this is therefore not a corner of the polytope.

Thus, in this toy example, while selecting the wrong edge cannot result in the objective function taking the value $0$, we still found the optimal solution.
In general, we might need to update the objective function.
To update the objective function, we change the outgoing edges of some (or all) vertices, such that the overall value of the objective function goes down.
Note that this can be done not only when the linear program returns an optimal solution, but also during its computation.
For example, when using a simplex method, updating the objective function can be used as an alternative pivoting rule at any point of the traversal of the polytope.

Unfortunately, the case in which the valuation returned as solution is computed using an objective function based on non-optimal strategies, is not the general case.
The simplest way of seeing this is to use different discount factors for the game of Figure~\ref{fig:example}%
\footnote{Note that we can also replace the transitions with a smaller discount factor by multiple transitions with a larger discount factor. This would allow for keeping the discount factor uniform, but needlessly complicate the discussion and inflate the size of the example.}, let's say $\frac{1}{3}$ for the self-loop of the maximiser vertex and $\frac{2}{3}$ for the other two transitions, so that the three equations are: (1) $\val(a) \leq 1 + \frac{2}{3} \val(a)$, (2) $\val(b) \geq \frac{2}{3} \val(b)$, and (3) $\val(b) \geq \frac{1}{3} \val(a)$.
Making the adjusted inequations (2) and (3) sharp still results in the values $\val(a)=\val(b)=0$, and the objective function still takes the value of $1$. While making inequations (1) and (3) sharp provides the values $\val(a)=3$ and $\val(b)=2$; the objective function takes the value $\frac{4}{3}$ at this point.
Finally, if we consider the last combination, making (1) and (2) sharp still conflicts with inequation (3).

Thus, $\val(a)=\val(b)= 0$ would be the optimal solution for the given objective function, which is not the valuation of the game. 
We will then update the candidate strategies so that the sum of the offsets goes down.

\subsection{Comparison to strategy improvement}
The closest relative to our new approach are strategy improvement algorithms.
Classic strategy improvement approaches solve this problem of finding the valuation of a game (and usually also co-optimal strategies) by (1) fixing a strategy for one of the players (we assume w.l.o.g.\ that this is the maximiser), (2) finding a valuation function for the one player game that results from fixing this strategy (often together with an optimal counter strategy for their opponent), and (3) updating the strategy of the maximiser by applying local improvements.
This is repeated until no local improvements are available, which entails that the constraint system is satisfied.

For Step (2) of this approach, we can use linear programming, which does invite a comparison to our technique.
The linear program for solving Step (2) would not use all inequations: it would, instead, replace the inequations defined by the currently selected edges of the maximiser by equations, while dropping the inequations defined by the other maximiser transitions.
The objective function would then be to minimise the values of all vertices while still complying with the remaining (in)equations.%maximise

Thus, while in our novel symmetric approach the constraints remain while the objective is updated, in strategy improvement the objective remains, while the constraints are updated.

Moreover, the players and their strategies are treated quite differently in strategy improvement algorithms: while the candidate strategy of the maximiser results in making the inequations of the selected edges sharp (and dropping all other inequations of maximiser edges), the optimal counter strategy is found by minimising the objective. This is again in contrast to our novel symmetric approach, which treats both players equally.

A small further difference is in the valuations that can be taken: the valuations that strategy improvement algorithms can take are the valuations of strategies, while the valuations our objective improvement algorithm can take on the way are the corners of the polytope defined by the inequations.
Except for the only intersection point between the two (the valuation of the game), these corners of the polytope do not relate to the value of strategies. 
Table \ref{tab:comparison} summarises these observations.

\renewcommand{\arraystretch}{0.9}
\begin{table}[t]
    \centering
    \begin{tabular}{c||c|c}
         & \textbf{Objective Improvement} & \textbf{Strategy Improvement} \\[0.5em]
        \hline
        \textbf{players}  & symmetric treatment & asymmetric treatment \\[0.5em]
        \textbf{constraints} & remain the same: & change: \\
         & one inequation per edge & one inequation for each edge \\
         & & defined by the current strategy for \\
         & & the strategy player, one inequation \\
         & & for every edge of their opponent \\[0.5em]
        \textbf{objective} & minimise errors for selected edges & maximise values \\[0.5em]
        \textbf{update} & objective: & strategy: \\
         & one edge for each vertex & one edge for each vertex \\
         & & of the strategy player  \\[0.5em]
        \textbf{valuations} & corners of polytope & defined by strategies
         
    \end{tabular}
    \vspace{1em}
    \caption{A comparison of the
    novel objective improvement with classic strategy improvement.}
    \label{tab:comparison}
\end{table}

\section{General Objective Improvement}
\label{sec:general}
In this section, we present the approach outlined in the previous section more formally, 
while keeping the most complex step -- updating the candidate strategy to one which is \emph{better} in that it defines an optimisation function that can take a smaller value -- abstract. (We will turn back to the question of how to find better strategies in Section \ref{sec:choose}.)
This allows for discussing the principal properties more clearly.

A general outline of our \emph{objective improvement} approach is based on
this algorithm:

\begin{wrapfigure}{r}{0.48\textwidth}
\begin{algorithm}[H]
  \caption{\label{alg:algname} Objective Improvement}
  \SetKwInOut{Input}{input} \SetKwInOut{Output}{output}
  
  \Input{A discounted payoff game $\mathcal{G} = (V_{\min}, V_{\max}, E, w, \lambda)$}
  \Output{The valuation $\val$ of $\mathcal G$}
  \SetInd{0.25em}{0.5em}
    {
    \nl $H \leftarrow{} \mathsf{Inequations}(\mathcal{G})$\\
    \nl $\sigma \leftarrow{} \mathsf{ChooseInitialStrategies}(\mathcal{G})$\\
    \nl \While{\sf{true}}{
        \nl $f_\sigma \leftarrow{} \mathsf{ObjectiveFunction}(\mathcal{G}, \sigma)$ \\
        \nl $\val \leftarrow{} \mathsf{LinearProgramming}(H, f_\sigma)$\\
        \nl \If{ $f_\sigma(\val) = 0$}{\Return $\val$}
            \nl $\sigma \leftarrow{} \mathsf{ChooseBetterStrategies}(\mathcal{G},\sigma)$
        }
    }
\end{algorithm}
  \vspace{-11pt}
\end{wrapfigure}

Before describing the procedures called by the algorithm, we first outline the principle.

When running on a discounted payoff game $\mathcal{G}= (V_{\min}, V_{\max}, E, w, \lambda)$, the algorithm uses a set of inequations defined by the edges of the game and the owner of the source of each edge.
This set of inequations denoted by $H$ contains one inequation for each edge and (different to strategy improvement approaches whose set of inequations is a subset of $H$) $H$ never changes.

The inequations from $H$ are computed by a function called $\mathsf{Inequations}$ that, given the discounted game $\mathcal{G}$, returns the set made of one inequation per edge $e=(v,v') \in E$, defined as follows:
\[ I_e =  \begin{cases}
        \val(v) \geq w_{%(v,v')
        e} + \lambda_e \val(v') &\text{if }v \in V_{\max} \\
        \val(v) \leq w_{%(v,v')
        e} + \lambda_e \val(v') &\text{otherwise .}
    \end{cases} 
\]

The set $H = \{I_e \mid e \in E\}$ is
defined as the set of all inequations for the edges of the game.

The algorithm also handles strategies for both players, treated as a single strategy $\sigma$.
They are initialised (for example randomly) by the function $\mathsf{ChooseInitialStrategies}$.

This joint strategy is used to define an objective function $f_{\sigma}$ by calling function $\mathsf{ObjectiveFunction}$, whose value on an evaluation $\val$ is: $f_{\sigma}(\val)=\sum_{v \in V} f_\sigma(\val, v)$ with the following objective function components: \\
\[
  f_\sigma(\val, v) = \off(\val,(v,\sigma(v)))\ ,
\]
where the offset of an edge $(v,v')$ for a valuation is defined as follows:\\
\[
  \off(\val, (v,v')) =
  \begin{cases}
    \val(v) - (w_{(v,v')} + \lambda_{(v,v')} \val(v')) &\text{if } v \in V_{\max} \\
    (w_{(v,v')} + \lambda_{(v,v')} \val(v')) - \val(v) &\text{otherwise}
  \end{cases}
\]

This objective function $f_\sigma$ is given to a linear programming algorithm, alongside with the inequations set $H$.
We underline that, due to the inequation to $I_{(v,v')}$, the value of $\off(\val, (v,v'))$ is non-negative for all $(v,v')\in E$ in any valuation $\val$ (optimal or not) that satisfies the system of inequations $H$.
We put a further restriction on $\val$ in that we require it to be the solution to a \emph{basis} $\mathbf{b}$ in $H$.
Such a basis consists of $|V|$ inequations that are satisfied sharply (i.e.\ as equations), such that these $|V|$ equations uniquely define the values of all vertices.
We refer to this valuation as the evaluation of $\mathbf{b}$, denoted $\val(\mathbf{b})$.

The call $\mathsf{LinearProgramming}(H, f_\sigma)$ to some linear programming algorithm then returns a valuation $\val$ of the vertices that minimise $f_\sigma$ while satisfying $H$, and for convenience require this valuation to also be $\val(\mathbf{b})$ for some base $\mathbf{b}$ of $H$. (Note that the simplex algorithm, for example, only uses valuations of this form in every step.)
We call this valuation a \emph{valuation associated to $\sigma$}.

\begin{obs}
\label{obs:nonneg}
At Line 6 of Algorithm \ref{alg:algname}, the value of $f_\sigma(\val)$ is non-negative.
\end{obs}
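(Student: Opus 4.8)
The plan is to show that $f_\sigma(\val)$ is a sum of terms, each of which is individually non-negative, so that the total is non-negative. Recall that at Line 6 the valuation $\val$ is the output of $\mathsf{LinearProgramming}(H, f_\sigma)$, so by construction $\val$ satisfies every inequation in $H$. The objective decomposes as $f_\sigma(\val) = \sum_{v \in V} f_\sigma(\val, v)$, and each component equals $\off(\val, (v,\sigma(v)))$ for the selected outgoing edge of $v$. Hence it suffices to argue that each offset $\off(\val, (v,\sigma(v)))$ is non-negative.

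The key observation, already flagged in the text just before the statement, is that the offset of an edge is precisely the slack of its associated inequation. First I would fix an arbitrary vertex $v$ and let $e = (v, \sigma(v))$ be its selected edge, with $e \in E$ since $\sigma$ is a strategy. I would then split into the two cases of the definition of $\off$. If $v \in V_{\max}$, then $\off(\val, e) = \val(v) - (w_e + \lambda_e \val(\sigma(v)))$, and the inequation $I_e \in H$ reads $\val(v) \geq w_e + \lambda_e \val(\sigma(v))$; since $\val$ satisfies $H$, this inequation holds, which is exactly the statement that this offset is non-negative. If instead $v \in V_{\min}$, then $\off(\val, e) = (w_e + \lambda_e \val(\sigma(v))) - \val(v)$, and the corresponding inequation $I_e$ reads $\val(v) \leq w_e + \lambda_e \val(\sigma(v))$, which again is precisely the assertion that the offset is non-negative. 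In both cases the sign of the offset matches the direction of the inequation, so each component is non-negative.

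Summing over all $v \in V$ then gives $f_\sigma(\val) = \sum_{v \in V} \off(\val, (v,\sigma(v))) \geq 0$, as claimed. I do not expect any real obstacle here: the statement is essentially an unpacking of definitions combined with the single fact that the linear program only returns valuations feasible for $H$. The one point that needs care is to confirm that every selected edge $(v,\sigma(v))$ indeed has a corresponding inequation in $H$ with the matching orientation; this is immediate because $H$ contains one inequation $I_e$ per edge $e \in E$ and its orientation is dictated by the owner of the source vertex $v$, which is exactly the same case distinction used in the definition of $\off$. Thus the alignment between the two case splits is automatic, and the argument is complete.
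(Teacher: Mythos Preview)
Your proposal is correct and matches the paper's own reasoning: the paper does not give a separate proof for this observation, but immediately before stating it notes that, due to the inequation $I_{(v,v')}$, the offset $\off(\val,(v,v'))$ is non-negative for every edge whenever $\val$ satisfies $H$. Your argument simply spells this out in full, with the same case split on the owner of $v$, so there is nothing to add or correct.
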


We say that a valuation $\val$ \emph{defines} strategies of both players if, for every vertex $v\in V$, the inequation of (at least) one of the outgoing edges of $v$ is sharp.
These are the strategies defined by using, for every vertex $v \in V$, an outgoing edge for which the inequation is sharp.
Note that there can be more than one of these inequations for some of the vertices.

\begin{obs}
\label{obs:0strategy}
If, for a solution $\val$ of $H$, $f_\sigma(\val) = 0$ holds, then, for every vertex $v \in V$, the inequation $I_{(v,\sigma(v))}$ for the edge $(v,\sigma(v))$ is sharp, and $\val$ therefore defines strategies for both players, those defined by $\sigma$, for example.
\end{obs}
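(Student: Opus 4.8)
The plan is to exploit the fact that $f_\sigma(\val)$ is a sum of non-negative terms, so that the vanishing of the sum forces each summand to vanish individually. First I would recall, for each vertex $v \in V$, how the term $f_\sigma(\val, v) = \off(\val, (v, \sigma(v)))$ relates to the inequation $I_{(v, \sigma(v))} \in H$. For $v \in V_{\max}$, the inequation reads $\val(v) \geq w_{(v,\sigma(v))} + \lambda_{(v,\sigma(v))} \val(\sigma(v))$, and the offset is exactly its left-hand side minus its right-hand side, hence non-negative precisely because $\val$ satisfies $H$. For $v \in V_{\min}$ the inequation is reversed, and so is the order of subtraction in the offset, so the offset is again non-negative. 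This is the content already recorded as Observation~\ref{obs:nonneg}.

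Next I would invoke the elementary fact that a finite sum of non-negative reals is zero if, and only if, every summand is zero. Applying this to $f_\sigma(\val) = \sum_{v \in V} f_\sigma(\val, v) = 0$ yields $\off(\val, (v, \sigma(v))) = 0$ for every $v \in V$. By the sign analysis above, the offset of $(v, \sigma(v))$ being zero means that the two sides of $I_{(v, \sigma(v))}$ coincide, i.e.\ the inequation is satisfied as an equation. Thus $I_{(v, \sigma(v))}$ is sharp for every vertex $v$.

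Finally, the last clause follows directly from the definition of a valuation \emph{defining} strategies: it suffices that for every vertex at least one outgoing edge has a sharp inequation. Since the edge $(v, \sigma(v))$ witnesses this for each $v$, the valuation $\val$ defines strategies for both players, and $\sigma$ itself is one such defining strategy.

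The only subtlety, and hence the one point deserving care, is to keep the two sign conventions aligned: both the direction of the inequation in $H$ and the order of subtraction in the offset flip between $V_{\max}$ and $V_{\min}$, and one must check that in each case they flip together, so that the offset is genuinely the slack of the corresponding inequation. Beyond this bookkeeping the argument is immediate, so I do not anticipate any real obstacle.
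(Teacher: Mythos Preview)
Your proposal is correct and matches the paper's own reasoning: the paper states this as an observation without a separate proof, relying implicitly on Observation~\ref{obs:nonneg} (non-negativity of each offset term) in exactly the way you spell out. Your argument is simply the natural unpacking of why the observation is immediate.
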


We can use, alternatively, $f_\sigma(\val) = 0$ as a termination condition, as shown in Algorithm \ref{alg:algname}, since in this case $\sigma$ must define co-optimal strategies.

\begin{thm}
\label{thm:main}If $\sigma$ describes co-optimal strategies, then $f_\sigma(\val) = 0$ holds at Line 6 of Algorithm \ref{alg:algname}.
If $\val$ defines strategies for both players joint in $\sigma$ at Line 6 of Algorithm \ref{alg:algname}, then $\sigma$ is co-optimal and $\val$ is the valuation of $\mathcal G$.
\end{thm}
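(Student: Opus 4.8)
The plan is to prove the two implications separately, using throughout the observation that $\off(\val,(v,v'))$ is exactly the slack of the inequation $I_{(v,v')}$. Since $f_\sigma$ is a sum of such offsets over the edges $(v,\sigma(v))$, and each is non-negative on the feasible region by Observation~\ref{obs:nonneg}, a feasible $\val$ satisfies $f_\sigma(\val)=0$ if and only if every edge $(v,\sigma(v))$ is sharp. In particular, the hypothesis of the second statement (that $\val$ defines strategies for both players joint in $\sigma$, i.e.\ each $(v,\sigma(v))$ is sharp) is equivalent to $f_\sigma(\val)=0$, which I will use freely.

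For the first implication I would exhibit the game value $\val(\mathcal G)$ as a witness. As $\val(\mathcal G)$ satisfies every inequation, it is feasible for $H$; and since $\sigma$ is co-optimal, the optimality characterisation in the preliminaries gives $\val(\mathcal G)(v)=w_{(v,\sigma(v))}+\lambda_{(v,\sigma(v))}\val(\mathcal G)(\sigma(v))$ for every $v$, so all offsets vanish and $f_\sigma(\val(\mathcal G))=0$. I would also confirm that $\val(\mathcal G)$ is a basic solution for $\mathbf b=\{I_{(v,\sigma(v))}\mid v\in V\}$: the $|V|$ equalities form the system $(\mathrm{Id}-\Lambda_\sigma)\val=W_\sigma$, where $\Lambda_\sigma$ has a single entry $\lambda_{(v,\sigma(v))}\in[0,1)$ per row; because $E$ is finite, $\|\Lambda_\sigma\|_\infty=\max_v\lambda_{(v,\sigma(v))}<1$, so $\mathrm{Id}-\Lambda_\sigma$ is invertible and $\mathbf b$ is a genuine basis. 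By Observation~\ref{obs:nonneg} the minimum of $f_\sigma$ over $H$ is then $0$, and as the linear program returns a minimiser, $f_\sigma(\val)=0$ at Line~6.

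For the second implication, assume $f_\sigma(\val)=0$, so every $(v,\sigma(v))$ is sharp. The goal is to identify $\val$ with $\val(\mathcal G)$ by a two-sided contraction argument. Fixing the minimiser to $\sigma_{\min}$ and combining the maximiser inequations of $H$ with the sharpness of $\sigma$-edges, I obtain $\val(v)=\max_{(v,v')\in E}(w_{(v,v')}+\lambda_{(v,v')}\val(v'))$ for $v\in V_{\max}$ and $\val(v)=w_{(v,\sigma_{\min}(v))}+\lambda_{(v,\sigma_{\min}(v))}\val(\sigma_{\min}(v))$ for $v\in V_{\min}$; this is precisely the Bellman fixed-point equation of the one-player maximisation game obtained by fixing $\sigma_{\min}$. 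That operator is a sup-norm contraction of modulus $\max_e\lambda_e<1$, hence has a unique fixed point, the best-response value $\sup_{\tau_{\max}}\val(\tau_{\max},\sigma_{\min})$, so $\val=\sup_{\tau_{\max}}\val(\tau_{\max},\sigma_{\min})\ge\val(\mathcal G)$. Fixing instead the maximiser to $\sigma_{\max}$ and arguing symmetrically with the minimiser inequations yields $\val=\inf_{\tau_{\min}}\val(\sigma_{\max},\tau_{\min})\le\val(\mathcal G)$. The two bounds give $\val=\val(\mathcal G)$, and then the sharp edges $(v,\sigma(v))$ are optimal by the characterisation in the preliminaries, so $\sigma$ is co-optimal.

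The offset/slack bookkeeping and the invertibility of $\mathrm{Id}-\Lambda_\sigma$ are routine. I expect the main obstacle to be the second implication: the crux is recognising that $H$ together with the sharpness of the $\sigma$-edges collapses, once one player is fixed, to the Bellman equation of a one-player discounted game, and then invoking the Banach contraction (uniqueness of the discounted fixed point) to sandwich $\val$ between $\val(\mathcal G)$ from above and below. Care is needed to justify $\max_e\lambda_e<1$, which holds because $E$ is finite and every $\lambda_e\in[0,1)$, and to keep the two one-player reductions genuinely symmetric so that the upper and lower bounds coincide.
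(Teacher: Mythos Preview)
Your proof is correct. It differs from the paper's own argument mainly in the second implication. The paper's proof is essentially a two-line appeal to the characterisation recorded in the preliminaries: it asserts that $\val(\mathcal G)$ is the \emph{unique} solution of $H$ for which every vertex has at least one sharp outgoing edge, and concludes both claims directly from this uniqueness. You instead reprove that uniqueness from scratch by a sandwich argument: freezing $\sigma_{\min}$ turns the sharp edges plus the maximiser inequations into the Bellman equation of a one-player maximising game, whose contraction fixed point gives $\val\ge\val(\mathcal G)$; freezing $\sigma_{\max}$ yields the dual bound; and the two together force $\val=\val(\mathcal G)$. Your treatment of the first implication is also more explicit than the paper's, in particular the check that $\{I_{(v,\sigma(v))}\}$ is a genuine basis via the invertibility of $\mathrm{Id}-\Lambda_\sigma$, which the paper leaves implicit in its requirement that $\mathsf{LinearProgramming}$ return a basic solution. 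The upshot: the paper's version is shorter because it leans on the cited determinacy/fixed-point result from \cite{ZP96}, whereas your version is self-contained and makes transparent \emph{why} the feasible point with all $\sigma$-edges sharp must equal $\val(\mathcal G)$.
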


\begin{proof}
The valuation $\val=\val(\mathcal G)$ of the game is the unique solution of $H$ for which, for all vertices $v$, the inequation to (at least) one of the outgoing edges of $v$ is sharp, and the edges for which they are sharp describe co-optimal strategies.
The valuation of the game is thus the only valuation that \emph{defines} strategies for both players, which shows the second claim.

Moreover, if $\sigma$ describes co-optimal strategies, then $f_\sigma (\val)=0$ holds for $\val=\val(\mathcal G)$ (and for this valuation only), which establishes the first claim.
\end{proof}

The theorem above ensures that, in case the condition at Line 6 holds, the algorithm terminates and provides the value of the game that then allows us to infer optimal strategies of both players.
Otherwise we have to improve the objective function and make another iteration of the while loop.
At Line 7, $\mathsf{ChooseBetterStrategies}$ can be any procedure that, for $f_\sigma(\val) \neq 0$, provides a pair of strategy $\sigma'$
\emph{better} than $\sigma$ as defined in the following subsection.

\paragraph{Better strategies}

A strategy $\sigma'$ for both players is \emph{better} than a strategy $\sigma$ if, and only if, the minimal value of the objective function $f_{\sigma'}$ (computed by $\mathsf{LinearProgramming}(H, f_{\sigma'})$) is strictly lower than the minimal value of the objective function for $f_\sigma$ (computed by $\mathsf{LinearProgramming}(H, f_{\sigma})$). Formally, $f_{\sigma'}(\val') < f_\sigma(\val)$.

While we discuss how to implement this key function in the next section, we observe here that the algorithm terminates with a correct result with any implementation that chooses a better objective function in each round: correctness is due to it only terminating when $\val$ \emph{defines} strategies for both players, which implies (cf.\ Theorem \ref{thm:main}) that $\val$ is the valuation of $\mathcal G$ ($\val=\val(\mathcal G)$) and all strategies defined by $\val$ are co-optimal.
Termination is obtained by a finite number of positional strategies: by Observation \ref{obs:nonneg}, the value of the objective function of all of them is non-negative, while the objective function of an optimal solution to co-optimal strategies is $0$ (cf.\ Theorem \ref{thm:main}), which meets the termination condition of Line 6 (cf.\ Observation \ref{obs:0strategy}).

\begin{crl}
Algorithm \ref{alg:algname} always terminates with the correct value.
\end{crl}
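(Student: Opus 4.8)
The plan is to establish the corollary by combining the correctness and termination arguments that the surrounding text has already set up, and making explicit the role of the \emph{better} strategy condition. The statement to prove is that Algorithm \ref{alg:algname} always terminates with the correct value of $\mathcal G$. Since the algorithm only ever exits through the \texttt{return} at Line 6, correctness and termination can be handled as two largely independent claims, and I would structure the proof exactly that way.

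For \textbf{correctness}, I would argue that whenever the algorithm returns, it returns the right value. The return happens precisely when $f_\sigma(\val) = 0$ at Line 6. By Observation \ref{obs:0strategy}, this condition forces the inequation $I_{(v,\sigma(v))}$ to be sharp for every vertex $v$, so $\val$ \emph{defines} strategies for both players. Theorem \ref{thm:main} (second claim) then tells us that $\val$ must equal $\val(\mathcal G)$ and that the strategies so defined are co-optimal. Hence any returned value is the correct value of the game; no further work is needed here beyond citing these two earlier results in the right order.

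For \textbf{termination}, the key is that $\mathsf{ChooseBetterStrategies}$ at Line 7 is required to return a strategy $\sigma'$ that is strictly better than $\sigma$, meaning the minimal objective value $f_{\sigma'}(\val')$ computed by linear programming is strictly smaller than $f_\sigma(\val)$. I would then observe that the minimal objective value attained over the polytope $H$ is a function of the joint positional strategy $\sigma$ alone, and there are only finitely many positional strategies (the product of the out-degrees over all vertices is finite). Because each iteration strictly decreases this minimal value, no joint strategy can recur across iterations, so the loop can execute at most $|\Sigma_{\min}| \cdot |\Sigma_{\max}|$ times. By Observation \ref{obs:nonneg} the objective value is bounded below by $0$, and by Theorem \ref{thm:main} (first claim) the co-optimal joint strategy achieves value exactly $0$; since the minimal values form a strictly decreasing sequence of distinct nonnegative reals drawn from this finite set, the sequence must reach the value $0$, at which point the Line 6 test succeeds and the algorithm returns.

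The main obstacle I anticipate is the subtle interplay in the termination argument: strictly decreasing values alone guarantee that the loop does not cycle and therefore halts after finitely many steps, but to conclude that it halts \emph{by hitting the termination condition} rather than by exhausting $\mathsf{ChooseBetterStrategies}$ without a valid move, I must appeal to the contract that a strictly better strategy always exists whenever $f_\sigma(\val) \neq 0$. This is exactly what Section \ref{sec:choose} is responsible for supplying, so at this level of generality the cleanest phrasing is that termination is \emph{conditional on $\mathsf{ChooseBetterStrategies}$ always succeeding}, and the proof should make clear that finiteness plus strict decrease plus the $0$-lower-bound from Observation \ref{obs:nonneg} together force termination at a value-$0$ solution, which is precisely the correct one.
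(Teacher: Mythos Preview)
Your proposal is correct and follows essentially the same approach as the paper: correctness via Observation~\ref{obs:0strategy} and the second claim of Theorem~\ref{thm:main}, termination via finiteness of positional strategies together with strict decrease, Observation~\ref{obs:nonneg}, and the first claim of Theorem~\ref{thm:main}. Your version is in fact more explicit than the paper's, which compresses the termination argument into a single sentence; your care in noting that the argument is conditional on $\mathsf{ChooseBetterStrategies}$ always succeeding (deferred to Section~\ref{sec:choose}) matches the paper's own caveat.
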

\section{Choosing Better Strategies}
\label{sec:choose}
In this section, we will discuss sufficient criteria for efficiently finding a procedure that implements $\mathsf{ChooseBetterStrategies}$.
For this, we make four observations described in the next subsections:

\begin{enumerate}
    \item All local improvements can be applied.
        A strategy $\sigma'$ is a local improvement to a strategy $\sigma$ if $f_{\sigma'}(\val) < f_\sigma(\val)$ holds for the current valuation $\val$ (Section~\ref{ssec:local}).
    
    \item If the current valuation $\val$ does not \emph{define} a pair of strategies $\sigma$ for both players and has no local improvements, then a better strategy $\sigma'$ can be found by applying only switches from and to edges that already have offset $0$ (Section \ref{ssec:noLocal}).
    
    \item The improvement mentioned in the previous point can be found for special games (the sharp and improving games defined in Section \ref{ssec:efficient}) by trying a single edge switch.
    
    \item Games can almost surely be made sharp and improving by adding random noise that retains optimal strategies (Section \ref{ssec:sharpeing}).
\end{enumerate}

Together, these four points provide efficient means for finding increasingly better strategies, and thus to find the co-optimal strategies and the valuation of the discounted payoff game.

As a small side observation, when using a simplex based technique to implement $\mathsf{LinearProgramming}$ at Line 5 of Algorithm \ref{alg:algname}, then the pivoting of the objective function from point (1.) and the pivoting of the base can be mixed (this will be discussed in Section \ref{ssec:mixing}).

\subsection{Local Improvements}
\label{ssec:local}

The simplest (and most common) case of creating better strategies $\sigma'$ from a valuation for the objective $f_\sigma$ for a strategy $\sigma$ is to consider \emph{local improvements}.
Much like local improvements in strategy iteration approaches, local improvements consider, for each vertex $v$, a successor $v' \neq \sigma(v)$, such that $\off(\val,(v,v')) < \off(\val,(v,\sigma(v))$ for the current valuation $\val$, which is optimal for the objective function $f_\sigma$.

To be more general, our approach does not necessarily requires to select only local improvements, but it can work with global improvements, though we cannot see any practical use of choosing differently.
For instance, if we treat the function as a global improvement approach, we can update the value for a vertex $v$ such that it increases by 1 and update the value of another vertex $v'$ such that it decreases by 2. The overall value of the function will decrease, even if locally some components increased their value. Interestingly, this cannot be done with a strategy improvement approach, as it requires to always locally improve the value of each vertex when updating.

\begin{lm}
\label{lem:better}
If $\val$ is an optimal valuation for the linear programming problem at Line 5 of Algorithm \ref{alg:algname} and $f_{\sigma'}(\val) < f_\sigma(\val)$, then $\sigma'$ is better than $\sigma$.
\end{lm}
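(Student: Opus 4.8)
The plan is to chain two inequalities together. We want to show $f_{\sigma'}(\val') < f_\sigma(\val)$, where $\val$ is the optimal valuation for $f_\sigma$ (the value returned at Line 5 for the strategy $\sigma$) and $\val'$ is the optimal valuation for $f_{\sigma'}$ (what $\mathsf{LinearProgramming}(H,f_{\sigma'})$ would return). By the definition of \emph{better} given just above the lemma, this strict inequality between the two optimal objective values is exactly what it means for $\sigma'$ to be better than $\sigma$, so establishing it suffices.

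First I would recall what optimality of $\val'$ gives us. Since $\val'$ minimises $f_{\sigma'}$ over all valuations satisfying $H$, and $\val$ is itself a valuation satisfying $H$ (it is the solution to a basis $\mathbf{b}$ of $H$, hence feasible), we have the trivial bound $f_{\sigma'}(\val') \leq f_{\sigma'}(\val)$: the optimum is no worse than the value at any particular feasible point. Next I would invoke the hypothesis $f_{\sigma'}(\val) < f_\sigma(\val)$, which compares the two \emph{objective functions evaluated at the same point} $\val$. Combining the two steps gives
\[
f_{\sigma'}(\val') \;\leq\; f_{\sigma'}(\val) \;<\; f_\sigma(\val),
\]
which is precisely the required strict inequality, and hence $\sigma'$ is better than $\sigma$.

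The only point that needs care is ensuring the first inequality is legitimate, i.e.\ that $\val$ is genuinely a feasible point for the linear program with objective $f_{\sigma'}$. This holds because the constraint set $H$ does not depend on the strategy at all; only the objective function changes between $f_\sigma$ and $f_{\sigma'}$. Since $\val$ was returned by $\mathsf{LinearProgramming}(H,f_\sigma)$, it satisfies $H$, so it is a feasible point for \emph{any} objective over $H$, in particular for $f_{\sigma'}$. I expect this observation to be the main (and essentially the only) obstacle: the proof is a one-line sandwich argument once one notices that optimality of $\val'$ is taken over the same feasible region that $\val$ lies in, so that the pointwise comparison in the hypothesis can be promoted to a comparison of optima. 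No genuine computation is required.
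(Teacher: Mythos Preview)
Your proof is correct and follows exactly the same argument as the paper: note that $\val$ is feasible for $H$ (since the constraint set is independent of the strategy), so optimality of $\val'$ for $f_{\sigma'}$ gives $f_{\sigma'}(\val') \leq f_{\sigma'}(\val)$, and chaining with the hypothesis yields $f_{\sigma'}(\val') \leq f_{\sigma'}(\val) < f_\sigma(\val)$. The paper's proof is the same one-line sandwich, just stated more tersely.
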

\begin{proof}
The valuation $\val$ is, being an optimal solution for the objective $f_\sigma$, a solution to the system of inequations $H$. For a solution $\val'$ which is optimal for $f_{\sigma'}$, we thus have $f_{\sigma'}(\val') \leq f_{\sigma'}(\val) < f_\sigma(\val)$, which implies that $\sigma'$ is better than $\sigma$ accordingly to notion of \emph{better} strategy provided at the end of Section 4.
\end{proof}
\subsection{No Local Improvements}
\label{ssec:noLocal}

The absence of local improvements means that, for all vertices $v \in V$ and all outgoing edges $(v,v') \in E$, $\off(\val,(v,v')) \geq \off(\val,(v,\sigma(v)))$.

We define for a valuation $\val$
optimal for a $f_\sigma$ (like the $\val$ produced in line 5 of Algorithm~\ref{alg:algname}):
\begin{itemize}
    \item $S_\val^\sigma = \{ (v,v') \in E \mid \off(\val,(v,v')) = \off(\val,(v,\sigma(v)))\}$ as the set of \emph{stale} edges;
    naturally, every vertex has at least one outgoing stale edge: the one defined by $\sigma$;
    
    \item $E_\val = \{ (v,v') \in E \mid \off(\val,(v,v')) =  0\}$ as the set of edges, for which the inequation for $\val$ is sharp;
    in particular, all edges in the base of $H$ that defines $\val$ are sharp (and stale); and
    
    \item $E_\val^\sigma$ as any set of edges between $E_\val$ and $S_\val^\sigma$ (i.e.\ $E_\val \subseteq E_\val^\sigma \subseteq S_\val^\sigma$) such that $E_\val^\sigma$ contains an outgoing edge for every vertex $v \in V$;
    we are interested to deal with sets that retain the game property that every vertex has a successor, we can do that by adding (non sharp) stale edges to $E_\val$.
\end{itemize}  

Note that $S_\val^\sigma$ is such a set, hence, an adequate set is easy to identify.
We might, however, be interested in keeping the set small, and choosing the edges defined by $E_\val$ plus one outgoing edge for every vertex $v$ that does not have an outgoing edge in $E_\val$.
The most natural solutions is to choose the edge $(v,\sigma(v)) \in E_\val^\sigma$ defined by $\sigma$ for each such vertex $v$.

\begin{obs}
\label{obs:stalegame}
If $\mathcal{G}= (V_{\min}, V_{\max}, E, w, \lambda)$ is a DPG and $\sigma$ a strategy for both players such that $\val$ is an optimal solution for the objective $f_\sigma$ to the system of inequations $H$,
then $\mathcal G'= (V_{\min}, V_{\max}, E_\val^\sigma, w, \lambda)$ is also a DPG.
\end{obs}

This simply holds because every vertex $v \in V$ retains at least one outgoing transition.

\begin{lm}
\label{lem:stale}
Let $\mathcal{G}= (V_{\min}, V_{\max}, E, w, \lambda)$ be a DPG, $\sigma$ a strategy for both players, $\val$ an optimal solution returned at Line 5 of Algorithm \ref{alg:algname} for $f_\sigma$, and let there be no local improvements of $\sigma$ for $\val$.
If $\val$ does not define strategies of both players, then there is a better strategy $\sigma'$ such that, for all $v \in V$, $(v,\sigma'(v)) \in E_\val^\sigma$.
\end{lm}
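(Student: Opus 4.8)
The plan is to exhibit the required $\sigma'$ as a pair of co-optimal strategies of the sub-game living on the edge set $E_\val^\sigma$, and to certify that it is \emph{better} by producing an explicit feasible descent direction for $f_{\sigma'}$ at $\val$.

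First I would record the only consequence of the hypotheses I really need: since $\val$ does not define strategies, some vertex $v_0$ has no sharp outgoing edge, so $\off(\val,(v_0,\sigma(v_0))) > 0$ and hence $f_\sigma(\val) > 0$ (all offsets are non-negative on $H$ by Observation \ref{obs:nonneg}). Next, by Observation \ref{obs:stalegame} the restriction $\mathcal G' = (V_{\min},V_{\max},E_\val^\sigma,w,\lambda)$ is again a DPG; let $u = \val(\mathcal G')$ be its value and let $\sigma'$ be a pair of co-optimal strategies of $\mathcal G'$. By construction $(v,\sigma'(v)) \in E_\val^\sigma$ for every $v$, which is one of the two things to prove, and the optimality characterisation recalled in the preliminaries gives $\off(u,(v,\sigma'(v))) = 0$ for all $v$.

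The heart of the argument is the direction $d = u - \val$. Because the offset is affine in the valuation, for every edge $e$ one has $\off(\val + t\,d, e) = (1-t)\,\off(\val,e) + t\,\off(u,e)$. For the sharp edges $e \in E_\val \subseteq E_\val^\sigma$ this equals $t\,\off(u,e) \ge 0$, since $u$ satisfies all inequations of $\mathcal G'$; for the remaining (slack) edges the value stays positive for small $t>0$. Hence $\val + t\,d$ stays in $H$ for small $t>0$, i.e.\ $d$ is a feasible direction. Evaluating the objective along it, and using $\off(u,(v,\sigma'(v))) = 0$, gives $f_{\sigma'}(\val + t\,d) = (1-t)\,f_{\sigma'}(\val)$. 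Finally, every edge of $E_\val^\sigma$ is stale, so $f_{\sigma'}(\val) = f_\sigma(\val) > 0$; therefore $f_{\sigma'}(\val + t\,d) = (1-t)\,f_\sigma(\val) < f_\sigma(\val)$ for $t \in (0,1)$. Since $\mathsf{LinearProgramming}(H,f_{\sigma'})$ returns the minimum of $f_{\sigma'}$ over $H$, its optimum is strictly below $f_\sigma(\val)$, which is exactly the optimum for $f_\sigma$; thus $\sigma'$ is better than $\sigma$.

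I expect the main obstacle to be the bookkeeping around the descent direction rather than a deep difficulty: one must check carefully that $d = u-\val$ is genuinely feasible (the key point being that the sharp edges, along which feasibility is tight, all belong to $\mathcal G'$, so $u$ keeps them non-negative), and that the co-optimal edges of $\mathcal G'$ are simultaneously sharp at $u$ and contained in $E_\val^\sigma$. The one conceptual step is to realise that the value of the sub-game $\mathcal G'$ -- rather than any single local switch -- supplies the direction that makes $f_{\sigma'}$ decrease; the equality $f_{\sigma'}(\val) = f_\sigma(\val)$ coming from staleness is what lets this local decrease be compared against the old optimum. I would also note that the absence of local improvements is used only to guarantee that $E_\val^\sigma$ is a well-defined stale set covering every vertex, so that the construction of $\mathcal G'$ makes sense.
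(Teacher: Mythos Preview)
Your proposal is correct and follows essentially the same route as the paper's proof: both take $\sigma'$ to be a co-optimal strategy of the sub-game $\mathcal G'=(V_{\min},V_{\max},E_\val^\sigma,w,\lambda)$, form the convex combination $\val+t(u-\val)$ with $u=\val(\mathcal G')$, argue feasibility in $H$ for small $t$ (sharp edges lie in $\mathcal G'$ so $u$ keeps them non-negative; slack edges stay slack), and use $f_{\sigma'}(\val)=f_\sigma(\val)>0$ together with $f_{\sigma'}(u)=0$ to conclude $f_{\sigma'}(\val+t\,d)=(1-t)f_\sigma(\val)<f_\sigma(\val)$. Your ``descent direction'' framing is just a pleasant LP-theoretic restatement of the paper's convex-combination step; the content is identical.
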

\begin{proof}
By Observation \ref{obs:stalegame},  $\mathcal G'=  (V_{\min}, V_{\max}, E_\val^\sigma, w, \lambda)$ is a DPG. Let $\val'$ be the valuation of $\mathcal G'$, and $\sigma'$ be the strategies for the two players defined by it.

If $\val'$ is also a valuation of $\mathcal G$, then we are done. However, this need not be the case, as the system of inequations $H'$ for $\mathcal G'$ is smaller than the set of inequations $H$ for $\mathcal G$, so $\val'$ might violate some of the inequations that are in $H$, but not in $H'$.
Given that $\val'$ is a valuation for $\mathcal G'$, it satisfies all inequations in $H'$.
Moreover, since $\val$ also satisfies all inequations of $H'$, it follows that the same inequations hold for every convex combination of $\val$ and $\val'$.

We now note that the inequations of $H$ that are not in $H'$ are not sharp for $\val$. Thus, there is an $\varepsilon \in (0,1]$ such that the convex combination $\val_\varepsilon = \varepsilon \cdot \val' + (1-\varepsilon) \val$ is a solution to those inequations.

We now have $f_{\sigma'}(\val) = f_\sigma(\val) > 0$ and $f_{\sigma'}(\val')=0$.
For an optimal solution $\val''$ of $H$ for the objective $f_{\sigma'}$, this provides $f_{\sigma'}(\val'') \leq f_{\sigma'}(\val_\varepsilon) < f_\sigma(\val)$.

Therefore $\sigma'$ is better than $\sigma$.
\end{proof}

When using the most natural choice, $E_\val^\sigma = E_\val \cup \{(v,\sigma(v))\mid v \in V\}$, this translates in keeping all transitions, for which the offset is \emph{not} $0$, while changing some of those, for which the offset already is $0$. This is a slightly surprising choice, since to progress one has to improve on the transitions whose offset is positive, and ignore those with offset 0.

\subsection{Games with Efficient Objective Improvement}
\label{ssec:efficient}

In this subsection, we consider sufficient conditions for finding better strategies efficiently.
Note that we only have to consider cases where the termination condition (Line 6 of Algorithm~\ref{alg:algname}) is not met.

The simplest condition for efficiently finding better strategies is the existence of local improvements.
(In particular, it is easy to find, for a given valuation $\val$, strategies $\sigma'$ for both players such that $f_{\sigma'}(\val) \leq f_{\sigma''}(\val)$ holds for all strategies $\sigma''$).
When there are local improvements, we can obtain a better strategy simply by applying them.

This leaves the case in which there are no local improvements, but where $\val$ also does not \emph{define} strategies for the two players.
We have seen that we can obtain a better strategy by only swapping edges, for which the inequations are sharp (Lemma \ref{lem:stale}).

We will now describe two conditions that, when both met, will allow us to efficiently find better strategies:
that games are \emph{sharp} and \emph{improving}.
\medskip

\noindent\textbf{Sharp games.}
To do this efficiently, it helps if there are always $|V|$ inequations that are sharp: there must be at least $|V|$ of them for a solution returned by the simplex method, as it is the solution defined by making $|V|$ inequations sharp (they are called the base), and requiring that there are exactly $|V|$ many of them means that the valuation we obtain defines a base.
We call such a set of inequations $H$, and games that define them \emph{sharp DPGs}.
\medskip

\noindent\textbf{Improving games.}
The second condition, which will allow us to identify better strategies efficiently, is to assume that, for every strategy $\sigma$ for both players, if a valuation $\val$ defined by a base is not optimal for $f_\sigma$ under the constraints $H$, then there is a single base change that improves it.
We call such sharp DPGs \emph{improving}.

We call a valuation $\val'$ whose base can be obtained from that of $\val$ by a single change to the base of $\val$ a neighbouring valuation to $\val$. 
We will show that, for improving games, we can sharpen the result of Lemma \ref{lem:stale} so that the better strategy $\sigma'$ also guarantees $f_{\sigma'}(\val')<f_\sigma(\val)$ for some neighbouring valuation $\val'$ to $\val$. 

This allows us to consider $O(|E|)$ base changes and, where they define a valuation, to seek optimal strategies for a given valuation.
Finding an optimal strategy for a given valuation is straightforward.

\begin{thm}
\label{theo:improve}
Let $\mathcal{G}= (V_{\min}, V_{\max}, E, w, \lambda)$ be an improving DPG, $\sigma$ a strategy for both players, $\val$ an optimal solution returned at Line 5 of Algorithm \ref{alg:algname} for $f_\sigma$, and let there be no local improvements of $\sigma$ for $\val$.
Then there is (at least) one neighbouring valuation $\val''$ to $\val$ such that there is a better strategy $\sigma'$ that satisfies $f_{\sigma'}(\val'')<f_\sigma(\val)$.

Such a strategy $\sigma'$ is better than $\sigma$, and it can be selected in a way that $(v,\sigma'(v)) \in E_\val^\sigma$ holds for all $v\in V$ for a given set $E_\val^\sigma$.
\end{thm}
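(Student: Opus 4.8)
The plan is to treat this as a refinement of Lemma~\ref{lem:stale}: that lemma already delivers a better strategy $\sigma'$ supported on $E_\val^\sigma$, and the extra \emph{improving} hypothesis should let me localise the witnessing valuation to a single base change away from $\val$. Before invoking it I would note that we are in the case where $\val$ does \emph{not} define strategies for both players: since the termination test of Line~6 fails we have $f_\sigma(\val)\neq 0$, and if $\val$ defined strategies then Theorem~\ref{thm:main} would force $\val=\val(\mathcal G)$, whence the absence of local improvements would make $\sigma$ pick a zero-offset edge at every vertex and give $f_\sigma(\val)=0$, a contradiction. Hence Lemma~\ref{lem:stale} applies.

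First I would apply Lemma~\ref{lem:stale} to the given set $E_\val^\sigma$ to obtain a strategy $\sigma'$ that is better than $\sigma$ and satisfies $(v,\sigma'(v))\in E_\val^\sigma$ for all $v\in V$; this already discharges the second sentence of the statement. The key observation is then that $E_\val^\sigma\subseteq S_\val^\sigma$ consists of \emph{stale} edges, so $\off(\val,(v,\sigma'(v)))=\off(\val,(v,\sigma(v)))$ for every $v$, and therefore $f_{\sigma'}(\val)=f_\sigma(\val)$. Next, because $\sigma'$ is better than $\sigma$, the linear-programming optimum of $f_{\sigma'}$ over $H$ is strictly below the optimum of $f_\sigma$ over $H$, and the latter equals $f_\sigma(\val)$ since $\val$ is the optimal solution returned for $f_\sigma$. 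Combining these two facts yields that $f_{\sigma'}(\val)$ strictly exceeds the optimum of $f_{\sigma'}$, i.e.\ $\val$ is \emph{not} optimal for the objective $f_{\sigma'}$.

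At this point the improving hypothesis does the remaining work: since $\val$ is defined by a base and is not optimal for $f_{\sigma'}$, there is a single base change producing a neighbouring valuation $\val''$ with $f_{\sigma'}(\val'')<f_{\sigma'}(\val)=f_\sigma(\val)$, which is exactly the claimed witness. I expect the main obstacle to be this bridge rather than either endpoint: one has to see that switching within stale edges leaves the objective value \emph{at} $\val$ unchanged, while the notion of \emph{better} compares LP \emph{optima}, not values at $\val$. It is precisely the mismatch between these two quantities that certifies $\val$ as sub-optimal for $f_{\sigma'}$ and thereby triggers the single-base-change guarantee; without the identity $f_{\sigma'}(\val)=f_\sigma(\val)$ one could not rule out that $\val$ is already optimal for $f_{\sigma'}$.
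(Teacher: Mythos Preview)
Your proof is correct and follows essentially the same approach as the paper: invoke Lemma~\ref{lem:stale} to obtain $\sigma'$ supported on $E_\val^\sigma$, observe that $\val$ is not optimal for $f_{\sigma'}$, and then apply the improving hypothesis to get a neighbouring valuation with strictly smaller $f_{\sigma'}$-value. Your presentation is in fact slightly cleaner than the paper's: you make the identity $f_{\sigma'}(\val)=f_\sigma(\val)$ (from staleness of $E_\val^\sigma$) explicit and apply the improving property directly, whereas the paper routes through a simplex path $\val=\val_0,\val_1,\ldots,\val_n=\val'$ and invokes sharpness to ensure $\val_1\neq\val_0$, leaving the same identity implicit in its final step.
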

\begin{proof}
We apply Lemma~\ref{lem:stale} for $E_\val^\sigma = E_\val \cup \{(v,\sigma(v)\mid v \in V\}$ and use the resulting better strategy $\sigma'$ for this set $E^\sigma_\val$.
Let $\val'$ be the optimal solution for $f_{\sigma'}$ that satisfies the constraints $H$ defined by $\mathcal G$.
Note that since $\sigma'$ is better than $\sigma$ by Lemma \ref{lem:stale}, we have that  $f_{\sigma'}(\val')<f_\sigma(\val)$ and $f_\sigma(\val) \leq f_\sigma(\val')$.

We now consider an arbitrary sequence of evaluations $\val = \val_0, \val_1,\ldots,\val_n = \val'$ along the edges of the simplex from $\val$ to $\val'$, such that the value of the new objective function $f_{\sigma'}$ only decreases.
Note that such a path must exist, as the simplex algorithm would pick it.

The sharpness of $\mathcal G$ implies that $\val_1 \neq \val_0$, and considering that $\mathcal G$ is improving provides $f_{\sigma'}(\val_1)<f_{\sigma'}(\val_0)$.

Thus, when only applying a single base change, we move to a fresh value, $\val_1$, such that $f_{\sigma'}(\val_1) < f_\sigma(\val)$ for some $\sigma'$.

Note that $\sigma'$ was supplied by Lemma \ref{lem:stale}, so that $(v,\sigma'(v)) \in E_\val^\sigma$ holds.
\end{proof}

While we can restrict the selection of $\sigma'$ to those that comply with the restriction $(v,\sigma'(v)) \in E_\val^\sigma$, there is no particular reason for doing so; as soon as we have a neighbouring valuation $\val'$, we can identify a pair of strategies $\sigma'$ for which $f_{\sigma'}(\val')$ is minimal, and select $\sigma'$ if $f_{\sigma'}(\val')< f_\sigma(\val)$ holds.

\subsection{Making Games Sharp and Improving}
\label{ssec:sharpeing}

Naturally, not every game is improving, or even sharp. In this subsection, we first discuss how to almost surely make games sharp by adding sufficiently small random noise to the edge weights, and then discuss how to treat games that are not improving by assigning randomly chosen factors, with which the offsets of edges are weighted.
Note that these are `global' adjustments of the game that only need to be applied once, as it is the game that becomes sharp and improving, respectively.

Starting with the small noise to add on the edge weights, we first create notation for expressing how much we can change edge weights, such that joint co-optimal strategies of the resulting game are joint co-optimal strategies in the original game.
To this end, we define the \emph{gap} of a game.
\medskip

\noindent\textbf{Gap of a game.}
For a DPG  $\mathcal{G}= (V_{\min}, V_{\max}, E, w, \lambda)$, we call $\lambda^*=\max\{\lambda_e\mid e \in E\}$ its contraction. For a joint strategy $\sigma$ that is \emph{not} co-optimal, we call $\gamma_\sigma = - \min \{\off(\val(\sigma),e) \mid e \in E\}$; note that $\gamma_\sigma >0$ holds.
We call the minimal%
\footnote{This skips over the case where all strategies are co-optimal, but that case is trivial to check and such games are trivial to solve, so that we ignore this case in this subsection.}
such value $\gamma$ the \emph{gap of $\mathcal G$}.

Note that $\val(\sigma)$ is the valuation of the joint strategy $\sigma$, not the outcome of the optimisation problem. This is because we use the gap of the game to argue that a non-co-optimal strategy remains non-co-optimal after a small distortion of the edge weights, so that the value of the joint strategy itself is useful. (It is much easier to access than the result of optimisation.) This also implies that the offsets can be negative. 

We now use the gap of a game $\gamma$ to define the magnitude of a change to all weights, such that all strategies that used to have a gap still have one.

\begin{lm}\label{lem9}
Let $\mathcal G = (V_{\min}, V_{\max}, E, w, \lambda)$ be a DPG with
contraction $\lambda^*$ and gap $\gamma$, and let \\ $\mathcal G' = (V_{\min}, V_{\max}, E, w', \lambda)$ differ from $\mathcal G$ only in the edge weights such that, for all $e \in E$, $|w_e - w_e'| \leq \frac{1-\lambda^*}{3}\gamma$ holds.
Then a joint co-optimal strategy from $\mathcal G'$ is also co-optimal for $\mathcal G$.
\end{lm}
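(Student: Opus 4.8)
I will prove the contrapositive: if the joint strategy $\sigma$ is \emph{not} co-optimal for $\mathcal G$, then $\sigma$ is not co-optimal for $\mathcal G'$. The key objects are the two fixed-point valuations of the \emph{same} positional strategy $\sigma$, namely $\val = \val_{\mathcal G}(\sigma)$ and $\val' = \val_{\mathcal G'}(\sigma)$; note these are the valuations of the joint strategy (cheap to access, as emphasised right before the lemma), not the outputs of any linear program. Recall that a joint strategy is co-optimal exactly when every offset of its own valuation is non-negative, so that a strictly negative offset witnesses non-co-optimality. The plan is therefore: first bound how far $\val'$ can drift from $\val$ under the weight perturbation; then bound how far the offset of any edge can move as a consequence; and finally show that the most-violated edge of $\sigma$ in $\mathcal G$ (whose offset is at most $-\gamma$ by definition of the gap) still has a strictly negative offset in $\mathcal G'$.

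First I would control the valuation drift. Since $\sigma$ is positional and the discount function $\lambda$ is identical in $\mathcal G$ and $\mathcal G'$, the play $\rho(v,\sigma)$ follows the \emph{same} edge sequence $e_0 e_1 e_2 \ldots$ in both games, and the two valuations differ only through the weights:
\[
  \val'(v) - \val(v) = \sum_{i=0}^{\infty} (w'_{e_i} - w_{e_i}) \prod_{j=0}^{i-1} \lambda_{e_j}.
\]
Bounding each $|w'_{e_i} - w_{e_i}| \leq \tfrac{1-\lambda^*}{3}\gamma$ by hypothesis and each discount product by $\prod_{j<i}\lambda_{e_j} \leq (\lambda^*)^i$, the geometric series $\sum_i (\lambda^*)^i = \tfrac{1}{1-\lambda^*}$ makes the factor $(1-\lambda^*)$ cancel, yielding $|\val'(v) - \val(v)| \leq \tfrac{\gamma}{3}$ for every $v \in V$. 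This cancellation is the reason the bound in the hypothesis is phrased in terms of $\tfrac{1-\lambda^*}{3}$.

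Next I would bound the offset drift on a fixed edge $e=(v,v')$. Writing $\delta_u = \val'(u)-\val(u)$ and expanding the definition of $\off$, the difference $\off_{\mathcal G'}(\val',e) - \off_{\mathcal G}(\val,e)$ is (up to the sign fixed by the owner of $v$, which merely permutes the three summands) equal to $\delta_v - (w'_e - w_e) - \lambda_e\,\delta_{v'}$, whence
\[
  |\off_{\mathcal G'}(\val',e) - \off_{\mathcal G}(\val,e)| \leq \tfrac{\gamma}{3} + \tfrac{1-\lambda^*}{3}\gamma + \lambda^*\tfrac{\gamma}{3} = \tfrac{\gamma}{3}\bigl(1 + (1-\lambda^*) + \lambda^*\bigr) = \tfrac{2\gamma}{3}.
\]
Since $\sigma$ is not co-optimal in $\mathcal G$, we have $\gamma_\sigma \geq \gamma$, so some edge $e^*$ satisfies $\off_{\mathcal G}(\val,e^*) \leq -\gamma$; combining the two bounds gives $\off_{\mathcal G'}(\val',e^*) \leq -\gamma + \tfrac{2\gamma}{3} = -\tfrac{\gamma}{3} < 0$. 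Thus $\val'=\val_{\mathcal G'}(\sigma)$ violates the inequation $I_{e^*}$ of $\mathcal G'$, so $\val' \neq \val(\mathcal G')$ and $\sigma$ is not co-optimal in $\mathcal G'$, which is exactly the contrapositive. The computation is otherwise routine; the only real care needed is the bookkeeping of the two perturbation sources (weights directly, and weights through the valuation) and observing that the constant $3$ — rather than $2$ — is precisely what buys \emph{strict} negativity of the surviving offset, and hence genuine non-co-optimality rather than mere feasibility at equality.
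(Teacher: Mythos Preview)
Your proof is correct and follows essentially the same route as the paper's own argument: prove the contrapositive, bound the valuation drift by a geometric series to get $|\val'(v)-\val(v)|\le\gamma/3$, then use the triangle inequality on the offset of the witnessing edge to conclude it stays strictly negative in $\mathcal G'$. Your bookkeeping is in fact a touch tighter---you obtain $\tfrac{2\gamma}{3}$ for the total offset drift where the paper records $\tfrac{2+\lambda^*}{3}\gamma$---and your closing remark on why the constant $3$ (rather than $2$) is needed for \emph{strict} negativity is a nice observation that the paper leaves implicit.
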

\begin{proof}
The small weight disturbance, $|w_e - w_e'| \leq \frac{1-\lambda^*}{3}\gamma$ for all $e \in E$, immediately provides a small difference in the valuation: for all joint strategies $\sigma$, we have for $\val = \val(\sigma)$ on $\mathcal G$, and $\val' = \val(\sigma)$ on $\mathcal G'$, that $|\val(v) - \val'(v)| \leq \frac{1}{1-\lambda^*} \frac{1-\lambda^*}{3}\gamma = \frac{\gamma}{3}$, using the definition of $\val(\sigma)$ and triangulation.

More precisely, as $\sigma$ defines a run $\rho = e_0 e_1 e_2 \ldots$, and we have $\val(v) = \out(\rho)=\sum_{i=0}^{\infty} w_{e_i} \prod_{j=0}^{i-1}\lambda_{e_j}$
and $\val'(v)=\sum_{i=0}^{\infty} w_{e_i}' \prod_{j=0}^{i-1}\lambda_{e_j}$.
This provides $$|\val(v) - \val'(v)| \leq \sum_{i=0}^{\infty} |w_{e_i}-w_{e_i}'| \prod_{j=0}^{i-1}\lambda_{e_j} < \frac{1-\lambda^*}{3}\gamma \sum_{i=0}^{\infty}\prod_{j=0}^{i-1}\lambda_{e_j} \leq \frac{1-\lambda^*}{3}\gamma \sum_{i=0}^{\infty} (\lambda^*)^i = \frac{\gamma}{3}. $$

If $\sigma$ is not co-optimal for $\mathcal G$, we have an edge $e$ with $-\off(\val,e)=\gamma_\sigma \geq \gamma$.
Triangulation provides $$|\off(\val',e)-\off(\val,e)| < \frac{1+\lambda^*}{3}\gamma$$ and (using $\off'$ to indicate the use of $w_e'$ for $\mathcal G'$ instead of $w_e$ for $\mathcal G$), 
$$|\off'(\val',e)-\off(\val,e)| \leq \frac{2+\lambda^*}{3}\gamma< \gamma \leq \gamma_\sigma\ ,$$
which, together with the fact that $-\off(\val,e) \geq \gamma$, provides $\off'(\val',e)<0$.

Thus, $\sigma$ is not co-optimal for $\mathcal G'$.
\end{proof}

\begin{lm}\label{l10}
Given a DPGs $\mathcal{G}= (V_{\min}, V_{\max}, E, w, \lambda)$, the DPG $\mathcal G'= (V_{\min}, V_{\max}, E, w', \lambda)$ resulting from $\mathcal G$ by adding independently uniformly at random drawn values from an interval $(-\varepsilon,\varepsilon)$ to every edge weight, will almost surely result in a sharp game.
\end{lm}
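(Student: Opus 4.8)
The plan is to recognise that ``sharp'' is exactly the \emph{non-degeneracy} condition for the linear program defined by $H$. Taking the $|V|$ values $\val(v)$ as the variables, the sharpened inequation of an edge $e=(v,v')$ is the linear equation $\val(v) - \lambda_e\,\val(v') = w_e$, whose left-hand coefficients (a row $a_e$ with a $1$, or $1-\lambda_e$ for a self-loop, at the source and $-\lambda_e$ at the target) depend only on the graph and the discount factors $\lambda$, while the right-hand side is the weight $w_e$. A basic solution is the unique point pinned down by $|V|$ edges whose coefficient rows are linearly independent, and $\mathcal G'$ fails to be sharp precisely when some such basic point has a further, $(|V|+1)$-th, sharp inequation. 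So the goal is to show that, after perturbation, almost surely no basic point carries an extra tight constraint.

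First I would fix a candidate basis: a set $B$ of $|V|$ edges whose coefficient matrix $A_B$ is invertible (a condition on the $\lambda$'s alone, untouched by the noise), together with one further edge $e^\ast \notin B$. The basic point is $\val^\ast = A_B^{-1} w_B$, and $e^\ast$ being sharp there reads $a_{e^\ast}^\top A_B^{-1} w_B - w_{e^\ast} = 0$. This is a single affine equation in the perturbed weights, and the key observation is that it is \emph{non-trivial}: the weight $w_{e^\ast}$ enters with coefficient $-1$, so the equation is never identically zero, whatever the $\lambda$'s are. Its solution set is therefore a hyperplane in weight space, and this hyperplane is fixed in the sense that it does not itself depend on the random draw, only on $(E,\lambda)$.

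The probabilistic step is then routine. The perturbed weight vector $w' = w + U$, with $U$ drawn uniformly from $(-\varepsilon,\varepsilon)^{|E|}$, has an absolutely continuous distribution, so the probability that $w'$ lands on any fixed hyperplane is $0$. Since there are only finitely many pairs $(B,e^\ast)$, a union bound over these finitely many measure-zero ``degeneracy loci'' shows that, almost surely, no basic point has an additional sharp inequation; hence $\mathcal G'$ is almost surely sharp.

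The step needing the most care is the non-triviality claim in the second paragraph, namely verifying that the degeneracy condition really is a genuine (non-vacuous) linear constraint on the weights and that its defining hyperplane does not move with the noise. Both follow from the fact that the matrix $A_B$, and hence the row $a_{e^\ast}^\top A_B^{-1}$ multiplying $w_B$, is a function of $(E,\lambda)$ only, whereas $w_{e^\ast}$ appears with non-zero (indeed $-1$) coefficient; once this is secured, the measure-zero conclusion and the finite union bound are immediate.
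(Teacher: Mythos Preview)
Your proposal is correct and follows essentially the same route as the paper. The paper phrases the argument as showing that any two distinct bases $b_1$, $b_2$ almost surely yield distinct valuations---picking an edge $e\in b_1\setminus b_2$ and conditioning on its noise being drawn last---which is exactly your basis-plus-extra-edge degeneracy test (the edge $e$ plays the role of your $e^\ast$ and $b_2$ the role of your $B$), with the ``draw last'' conditioning in place of your appeal to absolute continuity.
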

\begin{proof}
There are only finitely many bases, and it suffices to compare two arbitrary but fixed ones, say $b_1$ and $b_2$.

As they are different, there will be one edge $e = (v,v')$ that occurs in $b_1$, but not in $b_2$.
As all weight disturbances are drawn independently, we assume without loss of generality that the weight disturbance to this edge is drawn last.

Now, the valuation $\val_2$ defined by $b_1$ does not depend on this final draw.
For $\val_2$, there is a value $w_e' = \val_2(v)-\lambda_e\val_2(v')$ that defines the weight $w_e'$ $e$ would need to have such that the inequation for $e$ is sharp.

For the valuation $\val_1$ defined by $b_1$ to be equal to $\val_2$, the weight for the edge $e$ (after adding the drawn distortion) needs to be exactly $w_e'$.
There is at most one value for the disturbance that would provide for this, and this disturbance for weight of $e$ is sampled with a likelihood of $0$.
\end{proof}

Putting these two results together, we get:

\begin{crl}
Given a pair of DPGs $\mathcal G = (V_{\min}, V_{\max}, E, w, \lambda)$ with contraction $\lambda^*$ and gap $\gamma$, and $\mathcal G' = (V_{\min}, V_{\max}, E, w', \lambda)$ obtained from $\mathcal G$ by adding independently uniformly at random drawn values from an interval $(-\varepsilon,\varepsilon)$ to every edge weight, for some $\varepsilon \leq \frac{1-\lambda^*}{3}\gamma$, then, a joint co-optimal strategy from $\mathcal G'$ is also co-optimal for $\mathcal G$, and $\mathcal G'$ is almost surely sharp.
\end{crl}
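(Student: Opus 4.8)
The plan is to obtain both conclusions as immediate consequences of Lemmas~\ref{lem9} and~\ref{l10}, since this corollary is essentially their conjunction under a single noise model. The only genuine work is to check that the hypotheses of each lemma are satisfied by the random perturbation described in the statement, and then to reconcile the two different modes of conclusion (one deterministic, one almost sure).

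For the first conclusion, I would verify the weight-bound hypothesis of Lemma~\ref{lem9}. Since every weight of $\mathcal G'$ is obtained from the corresponding weight of $\mathcal G$ by adding a value drawn from $(-\varepsilon,\varepsilon)$, we have $|w_e - w_e'| < \varepsilon \leq \frac{1-\lambda^*}{3}\gamma$ for all $e \in E$, which is exactly the condition required by Lemma~\ref{lem9}. Applying that lemma yields that any joint co-optimal strategy of $\mathcal G'$ is co-optimal for $\mathcal G$. I would emphasise that this conclusion is \emph{deterministic}: it holds for every realisation of the random draws, because the bound $\varepsilon \leq \frac{1-\lambda^*}{3}\gamma$ forces the hypothesis of Lemma~\ref{lem9} no matter which concrete values are sampled inside the interval.

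For the second conclusion I would invoke Lemma~\ref{l10} directly. That lemma asserts that adding independently and uniformly drawn noise from \emph{any} interval $(-\varepsilon,\varepsilon)$ almost surely produces a sharp game; the magnitude of $\varepsilon$ plays no role in its proof, so the extra bound $\varepsilon \leq \frac{1-\lambda^*}{3}\gamma$ imposed for the first conclusion is harmless. Combining the two, co-optimality transfer holds for every admissible draw while sharpness holds for almost every such draw, so almost surely both hold simultaneously. I expect no real obstacle here, as the statement merely packages the two lemmas; the single point warranting a word is the interplay between a sure event and an almost sure event, which is settled by noting that their intersection is again almost sure.
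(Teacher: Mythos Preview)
Your proposal is correct and matches the paper's approach exactly: the paper does not even spell out a proof for this corollary, merely prefacing it with ``Putting these two results together, we get,'' so your explicit verification of the hypotheses of Lemmas~\ref{lem9} and~\ref{l10} and the remark on combining a sure and an almost-sure event is, if anything, more detailed than what the paper provides.
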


Note that we can estimate the gap cheaply when all coefficients in $\mathcal G$ are rational.
The gap is defined as the minimum of $\gamma_\sigma$ over non-co-optimal joint strategies $\sigma$, and we start by fixing such a joint strategy.

For a given $v\in V$, $\sigma$ defines a run $\rho = e_0 e_1 e_2 \ldots$ in the form of a "lasso path", which consists of a (possibly empty) initial path $e_0,\ldots,e_k$, followed by an infinitely often repeated cycle $e_0',\ldots,e_\ell'$, where the only vertex that occurs twice on the path $e_0,\ldots,e_k,e_0',\ldots,e_\ell'$ is the vertex reached before $e_0'$ and after $e_\ell'$, while all other vertices occur only once.

In this run, an edge $e_i$ occurs only once and contributes $\prod_{j=0}^{i-1}\lambda_{e_j} w_{e_i}$ to the value of this run, while an edge $e_i'$ occurs infinitely often and contributes the value $\frac{\prod_{j=0}^{k}\lambda_{e_j} \cdot \prod_{j=0}^{i-1}\lambda_{e_j'}}{1-\prod_{j=0}^{\ell}\lambda_{e_j'}}w_{e_i'}$ to the value of the run.

Now all we need to do is to find a common denominator.
To do this, let $\denom(r)$ be the denominator of a rational number $r$.
It is easy to see that 
$$\mathsf{common} = \prod\limits_{v\in V}\denom(\lambda_{(v,\sigma(v))})^2 \cdot \denom(w_{(v,\sigma(v))})$$
is a common denominator of the contributions of all of these weights, and thus a denominator that can be used for the sum.

Looking at an edge $e$ that defines $\gamma_\sigma$, then $\gamma_\sigma$ can be written with the denominator $$\mathsf{common}\cdot \denom(\lambda_ew_e)\ .$$
Obviously, the nominator is at least $1$.

Estimating $\gamma$ can thus be done by using the highest possible denominators available in $\mathcal G$. The representation of the resulting estimate $\gamma$ is polynomial in the size of $\mathcal G$.
\medskip

\noindent\textbf{Biased sum of offsets.}
We modify sharp games that are not improving. This can be done by redefining the function $\off$ as follows:
\[
  \off'(\val, (v,v')) =
\alpha_{(v,v')} \cdot \off(\val, (v,v'))
\]
Such $\off$ are defined for every edge $e = (v,v')$, where all $\alpha_e$ are positive numbers, which we call the \emph{offset factor}.
Based on this change, we re-define all offset definitions and the objective function with a primed version that uses these positive factors.

\begin{thm}
Let $\mathcal{G}= (V_{\min}, V_{\max}, E, w, \lambda)$ be an improving DPG for a given set of positive numbers $\{\alpha_e > 0 \mid e \in E\}$, $\sigma$ a strategy for both players, $\val$ an optimal solution returned at Line 5 of Algorithm \ref{alg:algname} for the adjusted function $f_\sigma'$, and let there be no local improvements of $\sigma$ for $\val$.
Then there is neighbouring valuation $\val''$ to $\val$ such that there is a better strategy $\sigma'$ that satisfies $f_{\sigma'}(\val'')<f_\sigma(\val)$.

Such a strategy $\sigma'$ is better than $\sigma$, and it can be selected in a way that $(v,\sigma'(v)) \in E_\val^\sigma$ holds for all $v\in V$ for a given set $E_\val^\sigma$.
\end{thm}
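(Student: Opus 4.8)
The plan is to observe that replacing $\off$ by the biased $\off'(\val,e)=\alpha_e\cdot\off(\val,e)$ with strictly positive factors $\alpha_e$ changes nothing that the proof of Theorem~\ref{theo:improve} actually relies on, and then to replay that argument verbatim with every offset, objective, stale set, and "no local improvement" hypothesis read in its primed form (as the paper redefines them).

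First I would record the structural invariants that make this transcription legitimate. The feasible region is the polytope cut out by $H$, whose inequations $I_e$ do not mention the factors $\alpha_e$; hence $H$, its bases, and the neighbouring relation on valuations are untouched. Since each $\alpha_e>0$, we have $\off'(\val,e)=0$ exactly when $\off(\val,e)=0$, so the set $E_\val$ of sharp edges — and therefore every base of $H$ — is literally the same whether read through $\off$ or $\off'$; likewise $\off'(\val,e)\ge 0$ precisely when $\off(\val,e)\ge 0$, so Observation~\ref{obs:nonneg} and Observation~\ref{obs:0strategy} survive and $f'_\sigma(\val)=0$ still characterises co-optimality of $\sigma$. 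Reading the stale set, the derived set $E_\val^\sigma$, and the absence of local improvements through $\off'$ as well, Observation~\ref{obs:stalegame} and Lemma~\ref{lem:stale} hold word for word: the convex-combination argument of Lemma~\ref{lem:stale} uses only that the inequations of $H$ not retained are non-sharp for $\val$ — a statement about the zero set of the offset, which coincides for $\off$ and $\off'$ — together with $f'_{\sigma'}(\val)=f'_\sigma(\val)>0$ and $f'_{\sigma'}(\val')=0$, both immediate for a $\sigma'$ taken inside the primed stale set.

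Then I would transcribe the proof of Theorem~\ref{theo:improve}. Applying the primed Lemma~\ref{lem:stale} with $E_\val^\sigma=E_\val\cup\{(v,\sigma(v))\mid v\in V\}$ yields a strategy $\sigma'$ with $(v,\sigma'(v))\in E_\val^\sigma$ and, for the $H$-optimal solution $\val'$ of $f'_{\sigma'}$, the inequalities $f'_{\sigma'}(\val')<f'_\sigma(\val)\le f'_\sigma(\val')$. Choosing a simplex path $\val=\val_0,\val_1,\dots,\val_n=\val'$ along which $f'_{\sigma'}$ is non-increasing, the sharpness of $\mathcal G$ (a property of $H$, hence unchanged) forces $\val_1\neq\val_0$, and the improving hypothesis — now stated for the biased objective, which is exactly the premise of this theorem — gives $f'_{\sigma'}(\val_1)<f'_{\sigma'}(\val_0)=f'_\sigma(\val)$. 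Setting $\val''=\val_1$, a single base change already produces the claimed neighbouring valuation, while the membership $(v,\sigma'(v))\in E_\val^\sigma$ is inherited directly from Lemma~\ref{lem:stale}.

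The only point demanding genuine care, and the one I would flag as the crux, is the invariance of $E_\val$ under the reweighting: everything downstream — that the dropped inequations stay non-sharp, that bases and neighbours are preserved, and that the objective vanishes exactly at co-optimal strategies — rests on the equivalence $\off'(\val,e)=0\iff\off(\val,e)=0$, and this is precisely where the strict positivity $\alpha_e>0$ is indispensable. A factor equal to $0$ would annihilate an inequation's contribution and break both the sharp/non-sharp dichotomy and the characterisation of co-optimality, so the hypothesis $\alpha_e>0$ is load-bearing rather than cosmetic.
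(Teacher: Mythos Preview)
Your proposal is correct and follows exactly the paper's own approach: the paper's proof is the single sentence ``All proofs can be repeated verbatim with the new offset definition,'' and you have simply unpacked that sentence, spelling out which invariants survive the reweighting and why strict positivity of the $\alpha_e$ is the load-bearing hypothesis. Nothing further is needed.
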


\begin{proof}
All proofs can be repeated verbatim with the new offset definition.
\end{proof}

\begin{thm}\label{t13}
If each offset factor $\alpha_e$ is selected independently uniformly at random from a bounded interval of positive numbers%
\footnote{or from any other distribution over positive numbers that has $0$ weight for all individual points}, 
then a sharp DPG $\mathcal{G}= (V_{\min}, V_{\max}, E, w, \lambda)$ is almost surely improving for a sampled set of positive numbers $\{\alpha_e > 0 \mid e \in E\}$.
\end{thm}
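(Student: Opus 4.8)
The plan is to recast the \emph{improving} property as a non-degeneracy condition on the objective directions and then to eliminate all degeneracies simultaneously with a union bound over measure-zero events, in the same spirit as the sharpness argument of Lemma~\ref{l10}. It suffices to establish the stronger property that, for every joint strategy $\sigma$, the scaled objective $f_\sigma'$ takes pairwise distinct values on the finitely many vertices of the polytope defined by $H$. Indeed, sharpness already guarantees that each vertex has exactly $|V|$ tight inequations, so vertices are in bijection with feasible bases and adjacency is realised by single base changes. If, in addition, $f_\sigma'$ is injective on the vertex set, then no edge of the polytope is a level set of $f_\sigma'$, so from any non-optimal vertex the improving direction reaches an adjacent vertex along which $f_\sigma'$ strictly decreases; that is a single base change that improves. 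This is exactly the point where the proof of Theorem~\ref{theo:improve} invokes ``improving'' to pass from $\val_1\neq\val_0$ to $f_{\sigma'}(\val_1)<f_{\sigma'}(\val_0)$.

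Next I would fix a strategy $\sigma$ and two distinct vertices $\val_a\neq\val_b$ and write the bad event $f_\sigma'(\val_a)=f_\sigma'(\val_b)$ as the linear equation
\[
  L_{\sigma,\val_a,\val_b}(\alpha) \;=\; \sum_{v\in V}\alpha_{(v,\sigma(v))}\Bigl(\off(\val_a,(v,\sigma(v)))-\off(\val_b,(v,\sigma(v)))\Bigr) \;=\; 0
\]
in the randomly drawn offset factors $\alpha$. The crucial point is that this linear form is not identically zero. Its coefficient vector is the difference of the two offset vectors, and the map sending a valuation $\val$ to $\bigl(\off(\val,(v,\sigma(v)))\bigr)_{v\in V}$ is affine with linear part $D\,(I-\Lambda P_\sigma)$, where $P_\sigma$ is the $0/1$ successor matrix of $\sigma$, $\Lambda=\mathrm{diag}(\lambda_{(v,\sigma(v))})$, and $D=\mathrm{diag}(\pm1)$ records whether a vertex lies in $V_{\max}$ or $V_{\min}$. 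Since every discount factor is strictly below $1$, we have $\|\Lambda P_\sigma\|_\infty\le\lambda^*<1$, so $I-\Lambda P_\sigma$ is invertible by a Neumann-series argument; hence $D(I-\Lambda P_\sigma)$ is a linear isomorphism, the constant weight terms cancel in the difference, and the coefficient vector equals $D(I-\Lambda P_\sigma)(\val_a-\val_b)\neq0$. Consequently $L_{\sigma,\val_a,\val_b}$ is a genuine hyperplane in $\alpha$-space.

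Finally I would assemble the probabilistic argument exactly as in Lemma~\ref{l10}. Each non-trivial form $L_{\sigma,\val_a,\val_b}$ vanishes on a hyperplane of Lebesgue measure zero, and because the offset factors are drawn independently from an atomless distribution on a bounded positive interval, the probability that a fixed such form vanishes is $0$. There are only finitely many joint strategies and, by sharpness, only finitely many vertices (at most $\binom{|E|}{|V|}$ bases), hence only finitely many triples $(\sigma,\val_a,\val_b)$. A union bound over these measure-zero events shows that almost surely none of them occurs, so every $f_\sigma'$ is injective on the vertices and, by the first paragraph, $\mathcal G$ is almost surely improving.

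The main obstacle is the non-triviality argument of the second paragraph: the rest is a routine ``finitely many measure-zero hyperplanes'' computation, but the whole theorem collapses unless accidental cancellations in the coefficients of $L_{\sigma,\val_a,\val_b}$ are ruled out. The clean way to do this is the invertibility of $I-\Lambda P_\sigma$, which is precisely where the discounting assumption $\lambda_e<1$ enters, and I would isolate this linear-isomorphism fact as an explicit lemma, since it is the load-bearing observation that separates \emph{improving} from the separately handled \emph{sharp}.
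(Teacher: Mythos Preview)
Your argument is correct and follows the same overall architecture as the paper's proof: fix a strategy and two distinct base-induced valuations, show that the vector of offset differences is non-zero, conclude that the bad event is a genuine hyperplane in $\alpha$-space hit with probability zero, and finish by a union bound over the finitely many strategies and bases. Your opening paragraph also makes explicit why pairwise distinct objective values on vertices suffice for the improving property, which the paper leaves as ``immediately implies''.

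The one substantive difference is in how the non-triviality of $L_{\sigma,\val_a,\val_b}$ is established. You package it as invertibility of $D(I-\Lambda P_\sigma)$ via a Neumann series, isolating a clean linear-algebra lemma. The paper instead gives a direct contraction argument: assuming all offset differences vanish, it picks a vertex $v$ maximising $|\val_1(v)-\val_2(v)|$ and derives $|\val_1(\sigma(v))-\val_2(\sigma(v))|>|\val_1(v)-\val_2(v)|$ from $\lambda_e<1$, a contradiction. The two arguments are equivalent in content---both exploit the strict discounting to show the offset map is injective---but yours generalises more readily (the matrix formulation survives, for instance, under perturbations of $P_\sigma$), while the paper's is more self-contained and avoids any appeal to operator norms.
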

\begin{proof}
We show the claim for two arbitrary but fixed valuations $\val_1$ and $\val_2$ defined by two different bases, $b_1$ and $b_2$, respectively, that satisfy the inequations in $H$, and an arbitrary but fixed adjusted $f_\sigma$.
As there are finitely many bases and finitely many joint strategies, satisfying the requirement almost surely for them entails that the requirement is satisfied almost surely for the game.

As $\mathcal G$ is sharp, we have $\val_1 \neq \val_2$.

We first assume for contradiction that $\off(\val_1,(v,\sigma(v))) = \off(\val_2,(v,\sigma(v)))$ holds for all $v \in V$.
We pick a vertex $v$ such that $|\val_1(v) - \val_2(v)|>0$ is maximal, such a vertex exists since $\val_1 \neq \val_2$.
For this $v$, $\off(\val_1,(v,\sigma(v))) = \off(\val_2,(v,\sigma(v)))$ entails 
$\val_1(v) - \val_2(v) = \lambda_{(v,\sigma(v))} (\val_1(\sigma(v)) - \val_2(\sigma(v)))$.
Using $\lambda_e \in [0,1)$, we get
$|\val_1(\sigma(v)) - \val_2(\sigma(v))|>|\val_1(v) - \val_2(v)|>0$, which contradicts the maximality of $v$. Note that neither $\lambda_e$, nor the difference of $\val_1(\sigma(v))$ and $\val_2(\sigma(v))$ can be equal to 0 since $\val_1(v) \neq \val_2(v)$. Hence the contradiction.

We therefore have that $\off(\val_1,(v,\sigma(v))) \neq \off(\val_2,(v,\sigma(v)))$ holds for some $v\in V$.
As the $\alpha_e$ are drawn independently, we can assume w.l.o.g.\ that $\alpha_{(v,\sigma(v))}$ is drawn last.
There is at most one value $\alpha_{(v,\sigma(v))}'$ for which the condition
$$\sum_{v\in V} \off'(\val_1,(v,\sigma(v))) \neq \sum_{v\in V} \off'(\val_2,(v,\sigma(v)))$$
is not satisfied.

It therefore holds almost surely for all strategies that all base-induced valuations have a pairwise distinct image by the objective function associated to the strategy.
This immediately implies that the game is improving.
\end{proof}

Thus, we can almost surely obtain sharpness by adding small noise to the weights,
and almost surely make games improving by considering the offsets of the individual edges with a randomly chosen positive weight.
This guarantees cheap progress for the case where there are no local improvements.

\subsection{Mixing Pivoting on the Simplex and of the Objective}
\label{ssec:mixing}

When using a simplex based technique to implement $\mathsf{LinearProgramming}$ (Line 5 of Algorithm \ref{alg:algname}), then the algorithm mixes three
approaches that stepwise reduce the value of $f_\sigma(\val)$:
\begin{enumerate}
    \item The simplex algorithm updates the base, changing $\val$ (while retaining the objective function $f_\sigma$).
    \item Local updates, who change the objective function $f_\sigma$ (through updating $\sigma$)
    and retain $\val$.
    \item Non-local updates.
\end{enumerate}

Non-local updates are more complex than the other two, and the correctness proofs make use of the non-existence of the other two improvements. For both reasons, it seems natural to take non-local updates as a last resort.

The other two updates, however, can be freely mixed, as they both bring down the value of $f_\sigma(\val)$ by applying local changes.
That the improvements from (1) are given preference in the algorithm is a choice made to keep the implementation of the algorithm for using linear programs open, allowing, for example, to use ellipsoid methods \cite{Kha79} or inner point methods \cite{Kar84} to keep this step tractable.

\section{Discussion}
\label{sec:discuss}

There is widespread belief that mean payoff and discounted payoff games have two types of algorithmic solutions: value iteration~\cite{FGO20,Koz21} and strategy improvement~\cite{Lud95,Pur95,BV07,Sch08a,STV15}.
We have added a third method, which is structurally different and opens a new class of algorithms to attack these games.
Moreover, our new symmetric approach has the same proximity to linear programming as strategy improvement algorithms, which is an indicator of efficiency.

Naturally, a fresh approach opens the door to much follow-up research.
A first target for such research is the questions on how to arrange the selection of better strategies to obtain fewer updates, either proven on benchmarks, or theoretically in worst, average, or smoothed analysis.
In particular, it would be interesting to establish non-trivial upper or lower bounds for various pivoting rules. Without such a study, a trivial bound for the proposed approach is provided by the number of strategies (exponential). Moreover, the lack of a benchmarking framework for existing algorithms prevents us from testing and compare an eventual implementation.

A second question is whether this method as whole can be turned into an inner point method.
If so, this could be a first step on showing tractability of discounted payoff games -- which would immediately extend to mean-payoff and parity games.

\subsection*{Acknowledgements}

\includegraphics[height=8pt]{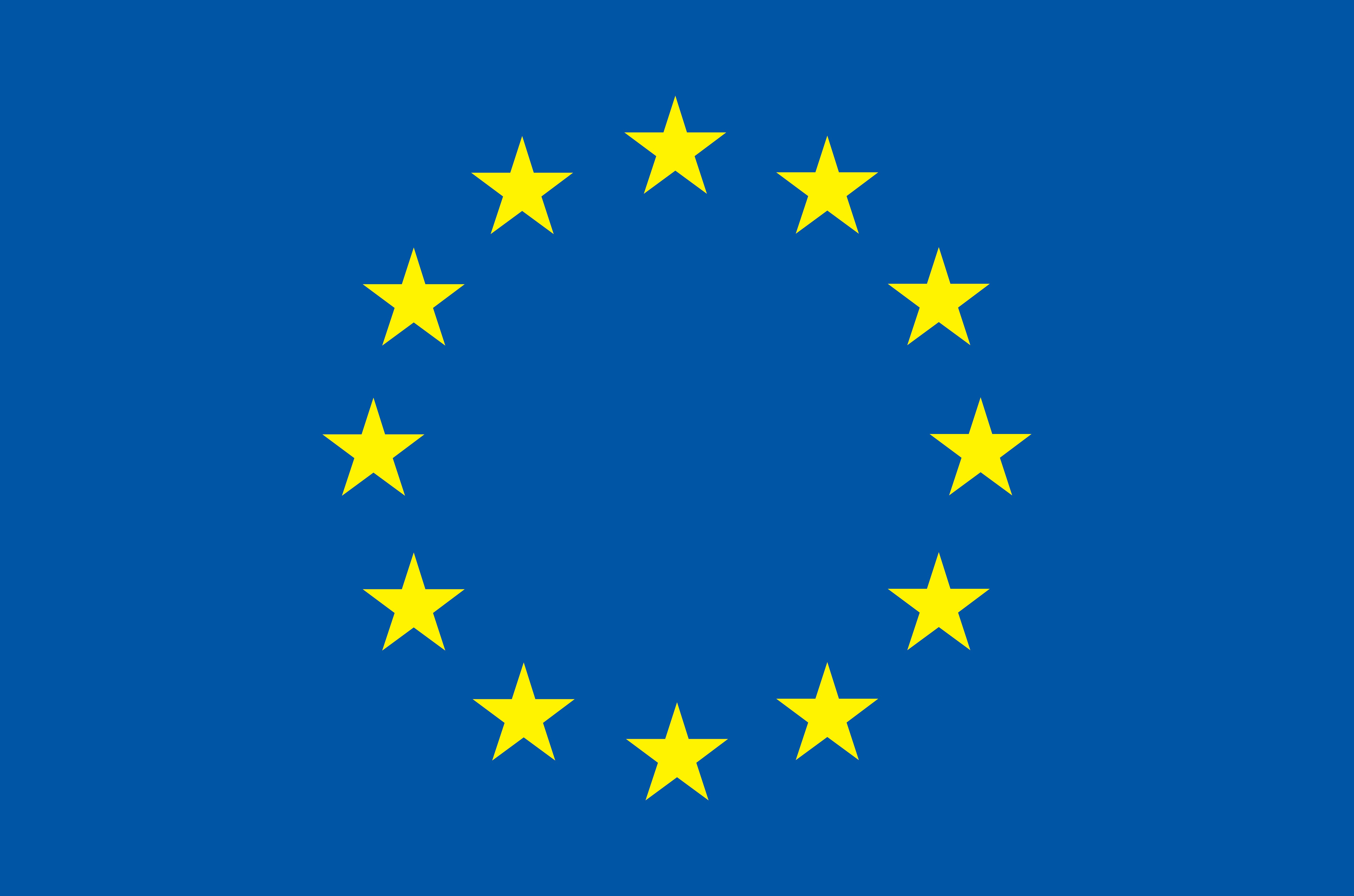} This project has received funding from the European Union’s Horizon 2020 research and innovation programme under the Marie Sk\l odowska-Curie grant agreement No 101032464.
It was supported by the EPSRC through the projects EP/X017796/1 (Below the Branches of Universal Trees) and
EP/X03688X/1 (TRUSTED: SecuriTy SummaRies for SecUre SofTwarE Development).

\bibliographystyle{eptcs}
\bibliography{bib}

\end{document}